\documentclass{article}      
\usepackage{amsmath,amssymb,graphicx,amsfonts,placeins,cite,balance,cite,bm,comment}

\usepackage{amsthm}
\usepackage{marginnote}
\usepackage{authblk}
\newtheorem{rem}{Remark}






\usepackage{tikz}
\usepackage{xifthen}
\usetikzlibrary{intersections}
\usepackage{tkz-euclide}

\pgfdeclarelayer{bg}    
\pgfsetlayers{bg,main}  
\definecolor{beige}{RGB}{245, 245, 220}
\definecolor{darkgrey}{RGB}{65, 65, 65}
\definecolor{lightgrey}{RGB}{250, 250, 250}

\newcommand*\smallcircled[1]{\tikz[baseline=(char.base)]{
            \node[shape=circle,draw,inner sep=0pt,minimum width=3.0mm] (char) {\footnotesize{#1}};}}

\usetikzlibrary{calc, arrows, fit, positioning, patterns, decorations.pathreplacing, shapes}
\tikzstyle{dash} = [dashed, -latex,>=latex]
\tikzstyle{line} = [draw, -latex,>=latex]
\tikzstyle{box} = [draw, minimum size=.8cm]
\tikzstyle{detbox} = [draw, minimum size=.5cm]
\tikzstyle{roundbox} = [draw, circle, inner sep=0pt, minimum size=3mm]
\tikzstyle{clamped} = [draw, fill=black, minimum size=0.15cm]
\tikzstyle{msgcircle} = [shape=circle, draw, inner sep=0pt, minimum size=4mm, fill=white]
\tikzstyle{darkmsgcircle} = [shape=circle, draw, inner sep=0pt, minimum size=4mm, fill=darkgrey, text=white, font=\bfseries]
\tikzstyle{msgdoublecircle} = [shape=circle, double, double distance=1.5pt, draw, inner sep=0pt, minimum size=5mm, fill=white]
\tikzstyle{darkmsgdoublecircle} = [shape=circle, double, double distance=1.5pt, draw, inner sep=0pt, minimum size=5mm, fill=darkgrey, text=white, font=\bfseries]


\newcommand{\msg}[6]{
      \ifthenelse{\isin{#1}{left} \AND \isin{#2}{down}}{
            \coordinate (anchor) at ($({#3})!{#5}!({#4})$);
            \node[msgcircle, xshift=-5.5mm] at (anchor) {#6};
            \node[xshift=-1.5mm] at (anchor) {$\downarrow$};
      }{}
      \ifthenelse{\isin{#1}{right} \AND \isin{#2}{down}}{
            \coordinate (anchor) at ($({#3})!{#5}!({#4})$);
            \node[msgcircle, xshift=5.5mm] at (anchor) {#6};
            \node[xshift=1.5mm] at (anchor) {$\downarrow$};
      }{}

      \ifthenelse{\isin{#1}{down} \AND \isin{#2}{right}}{
            \coordinate (anchor) at ($({#3})!{#5}!({#4})$);
            \node[msgcircle, yshift=-6.0mm] at (anchor) {#6};
            \node[yshift=-2.0mm] at (anchor) {$\rightarrow$};
      }{}
      \ifthenelse{\isin{#1}{up} \AND \isin{#2}{right}}{
            \coordinate (anchor) at ($({#3})!{#5}!({#4})$);
            \node[msgcircle, yshift=6.0mm] at (anchor) {#6};
            \node[yshift=2.0mm] at (anchor) {$\rightarrow$};
      }{}

      \ifthenelse{\isin{#1}{down} \AND \isin{#2}{left}}{
            \coordinate (anchor) at ($({#3})!{#5}!({#4})$);
            \node[msgcircle, yshift=-6.0mm] at (anchor) {#6};
            \node[yshift=-2.0mm] at (anchor) {$\leftarrow$};
      }{}
      \ifthenelse{\isin{#1}{up} \AND \isin{#2}{left}}{
            \coordinate (anchor) at ($({#3})!{#5}!({#4})$);
            \node[msgcircle, yshift=6.0mm] at (anchor) {#6};
            \node[yshift=2.0mm] at (anchor) {$\leftarrow$};
      }{}

      \ifthenelse{\isin{#1}{left} \AND \isin{#2}{up}}{
            \coordinate (anchor) at ($({#3})!{#5}!({#4})$);
            \node[msgcircle, xshift=-5.5mm] at (anchor) {#6};
            \node[xshift=-1.5mm] at (anchor) {$\uparrow$};
      }{}
      \ifthenelse{\isin{#1}{right} \AND \isin{#2}{up}}{
            \coordinate (anchor) at ($({#3})!{#5}!({#4})$);
            \node[msgcircle, xshift=5.5mm] at (anchor) {#6};
            \node[xshift=1.5mm] at (anchor) {$\uparrow$};
      }{}
}

\newcommand{\darkmsg}[6]{
      \ifthenelse{\isin{#1}{left} \AND \isin{#2}{down}}{
            \coordinate (anchor) at ($({#3})!{#5}!({#4})$);
            \node[darkmsgcircle, xshift=-5.5mm] at (anchor) {#6};
            \node[xshift=-1.5mm] at (anchor) {$\downarrow$};
      }{}
      \ifthenelse{\isin{#1}{right} \AND \isin{#2}{down}}{
            \coordinate (anchor) at ($({#3})!{#5}!({#4})$);
            \node[darkmsgcircle, xshift=5.5mm] at (anchor) {#6};
            \node[xshift=1.5mm] at (anchor) {$\downarrow$};
      }{}

      \ifthenelse{\isin{#1}{down} \AND \isin{#2}{right}}{
            \coordinate (anchor) at ($({#3})!{#5}!({#4})$);
            \node[darkmsgcircle, yshift=-6.0mm] at (anchor) {#6};
            \node[yshift=-2.0mm] at (anchor) {$\rightarrow$};
      }{}
      \ifthenelse{\isin{#1}{up} \AND \isin{#2}{right}}{
            \coordinate (anchor) at ($({#3})!{#5}!({#4})$);
            \node[darkmsgcircle, yshift=6.0mm] at (anchor) {#6};
            \node[yshift=2.0mm] at (anchor) {$\rightarrow$};
      }{}

      \ifthenelse{\isin{#1}{down} \AND \isin{#2}{left}}{
            \coordinate (anchor) at ($({#3})!{#5}!({#4})$);
            \node[darkmsgcircle, yshift=-6.0mm] at (anchor) {#6};
            \node[yshift=-2.0mm] at (anchor) {$\leftarrow$};
      }{}
      \ifthenelse{\isin{#1}{up} \AND \isin{#2}{left}}{
            \coordinate (anchor) at ($({#3})!{#5}!({#4})$);
            \node[darkmsgcircle, yshift=6.0mm] at (anchor) {#6};
            \node[yshift=2.0mm] at (anchor) {$\leftarrow$};
      }{}

      \ifthenelse{\isin{#1}{left} \AND \isin{#2}{up}}{
            \coordinate (anchor) at ($({#3})!{#5}!({#4})$);
            \node[darkmsgcircle, xshift=-5.5mm] at (anchor) {#6};
            \node[xshift=-1.5mm] at (anchor) {$\uparrow$};
      }{}
      \ifthenelse{\isin{#1}{right} \AND \isin{#2}{up}}{
            \coordinate (anchor) at ($({#3})!{#5}!({#4})$);
            \node[darkmsgcircle, xshift=5.5mm] at (anchor) {#6};
            \node[xshift=1.5mm] at (anchor) {$\uparrow$};
      }{}
}

\newcommand{\bwmsg}[6]{
      \ifthenelse{\isin{#1}{left} \AND \isin{#2}{down}}{
            \coordinate (anchor) at ($({#3})!{#5}!({#4})$);
            \node[msgdoublecircle, xshift=-5.5mm] at (anchor) {#6};
            \node[xshift=-1.5mm] at (anchor) {$\downarrow$};
      }{}
      \ifthenelse{\isin{#1}{right} \AND \isin{#2}{down}}{
            \coordinate (anchor) at ($({#3})!{#5}!({#4})$);
            \node[msgdoublecircle, xshift=5.5mm] at (anchor) {#6};
            \node[xshift=1.5mm] at (anchor) {$\downarrow$};
      }{}

      \ifthenelse{\isin{#1}{down} \AND \isin{#2}{right}}{
            \coordinate (anchor) at ($({#3})!{#5}!({#4})$);
            \node[msgdoublecircle, yshift=-6.0mm] at (anchor) {#6};
            \node[yshift=-2.0mm] at (anchor) {$\rightarrow$};
      }{}
      \ifthenelse{\isin{#1}{up} \AND \isin{#2}{right}}{
            \coordinate (anchor) at ($({#3})!{#5}!({#4})$);
            \node[msgdoublecircle, yshift=6.0mm] at (anchor) {#6};
            \node[yshift=2.0mm] at (anchor) {$\rightarrow$};
      }{}

      \ifthenelse{\isin{#1}{down} \AND \isin{#2}{left}}{
            \coordinate (anchor) at ($({#3})!{#5}!({#4})$);
            \node[msgdoublecircle, yshift=-6.0mm] at (anchor) {#6};
            \node[yshift=-2.0mm] at (anchor) {$\leftarrow$};
      }{}
      \ifthenelse{\isin{#1}{up} \AND \isin{#2}{left}}{
            \coordinate (anchor) at ($({#3})!{#5}!({#4})$);
            \node[msgdoublecircle, yshift=6.0mm] at (anchor) {#6};
            \node[yshift=2.0mm] at (anchor) {$\leftarrow$};
      }{}

      \ifthenelse{\isin{#1}{left} \AND \isin{#2}{up}}{
            \coordinate (anchor) at ($({#3})!{#5}!({#4})$);
            \node[msgdoublecircle, xshift=-5.5mm] at (anchor) {#6};
            \node[xshift=-1.5mm] at (anchor) {$\uparrow$};
      }{}
      \ifthenelse{\isin{#1}{right} \AND \isin{#2}{up}}{
            \coordinate (anchor) at ($({#3})!{#5}!({#4})$);
            \node[msgdoublecircle, xshift=5.5mm] at (anchor) {#6};
            \node[xshift=1.5mm] at (anchor) {$\uparrow$};
      }{}
}

\newcommand{\bwdarkmsg}[6]{
      \ifthenelse{\isin{#1}{left} \AND \isin{#2}{down}}{
            \coordinate (anchor) at ($({#3})!{#5}!({#4})$);
            \node[darkmsgdoublecircle, xshift=-5.5mm] at (anchor) {#6};
            \node[xshift=-1.5mm] at (anchor) {$\downarrow$};
      }{}
      \ifthenelse{\isin{#1}{right} \AND \isin{#2}{down}}{
            \coordinate (anchor) at ($({#3})!{#5}!({#4})$);
            \node[darkmsgdoublecircle, xshift=5.5mm] at (anchor) {#6};
            \node[xshift=1.5mm] at (anchor) {$\downarrow$};
      }{}

      \ifthenelse{\isin{#1}{down} \AND \isin{#2}{right}}{
            \coordinate (anchor) at ($({#3})!{#5}!({#4})$);
            \node[darkmsgdoublecircle, yshift=-6.0mm] at (anchor) {#6};
            \node[yshift=-2.0mm] at (anchor) {$\rightarrow$};
      }{}
      \ifthenelse{\isin{#1}{up} \AND \isin{#2}{right}}{
            \coordinate (anchor) at ($({#3})!{#5}!({#4})$);
            \node[darkmsgdoublecircle, yshift=6.0mm] at (anchor) {#6};
            \node[yshift=2.0mm] at (anchor) {$\rightarrow$};
      }{}

      \ifthenelse{\isin{#1}{down} \AND \isin{#2}{left}}{
            \coordinate (anchor) at ($({#3})!{#5}!({#4})$);
            \node[darkmsgdoublecircle, yshift=-6.0mm] at (anchor) {#6};
            \node[yshift=-2.0mm] at (anchor) {$\leftarrow$};
      }{}
      \ifthenelse{\isin{#1}{up} \AND \isin{#2}{left}}{
            \coordinate (anchor) at ($({#3})!{#5}!({#4})$);
            \node[darkmsgdoublecircle, yshift=6.0mm] at (anchor) {#6};
            \node[yshift=2.0mm] at (anchor) {$\leftarrow$};
      }{}

      \ifthenelse{\isin{#1}{left} \AND \isin{#2}{up}}{
            \coordinate (anchor) at ($({#3})!{#5}!({#4})$);
            \node[darkmsgdoublecircle, xshift=-5.5mm] at (anchor) {#6};
            \node[xshift=-1.5mm] at (anchor) {$\uparrow$};
      }{}
      \ifthenelse{\isin{#1}{right} \AND \isin{#2}{up}}{
            \coordinate (anchor) at ($({#3})!{#5}!({#4})$);
            \node[darkmsgdoublecircle, xshift=5.5mm] at (anchor) {#6};
            \node[xshift=1.5mm] at (anchor) {$\uparrow$};
      }{}
}

\newcommand{\T}{\operatorname{T}}
\newcommand{\NW}[1]{\mathcal{N}_W\!\left({#1}\right)}
\newcommand{\NV}[1]{\mathcal{N}_V\!\left({#1}\right)}
\renewcommand{\d}[1]{\operatorname{d}\!{#1}}
\newtheorem{thm}{\protect\theoremname}
\providecommand{\theoremname}{Theorem}
\newtheorem{cor}{\protect\corollaryname}
\providecommand{\corollaryname}{Corollary}

\newcommand{\E}{\mathbb{E}}

\newcommand{\KL}{\operatorname{D}}

\newcommand{\yf}{\overline{y}}
\newcommand{\yp}{\underline{y}}
\newcommand{\uf}{\overline{u}}
\newcommand{\up}{\underline{u}}
\newcommand{\xf}{\overline{x}}
\newcommand{\xp}{\underline{x}}

\newcommand{\pif}{\overline{\pi}}

\newcommand{\pf}{\overline{p}}
\newcommand{\pp}{\underline{p}}

\newcommand{\mode}{\operatorname{mode}}

\allowdisplaybreaks
 

 

\title{Application of the Free Energy Principle to Estimation and Control}

\author[$$]{Thijs van de Laar\thanks{TvdL acknowledges the support from GN Hearing A/S and the Netherlands Organization for Scientific Research, project number 13925.}}
\author[$$]{Ay\c ca \"Oz\c celikkale\thanks{A\"O acknowledges the support from the Swedish Research Council, Grant 2015-04011.}}
\author[$$]{Henk Wymeersch\thanks{HW acknowledges the support from the Swedish Research Council, Grant 2018-03701. The authors thank Themistoklis Charalambous for valuable comments and Magnus Koudahl for the stimulating discussions.}}

\affil[$*$]{Dept. of Electrical Engineering, Eindhoven University of Technology, The Netherlands}
\affil[$\dagger$]{Dept. of Electrical Engineering, Uppsala University, Sweden}
\affil[$\ddagger$]{Dept. of Electrical Engineering, Chalmers University of Technology, Sweden}

\begin{document}

\maketitle

\begin{abstract}
Based on a generative model (GM) and beliefs over hidden states, the free energy principle (FEP) enables an agent to sense and act by minimizing a free energy bound on Bayesian surprise. Inclusion of prior beliefs in the GM about desired states leads to active inference (ActInf). In this work, we aim to reveal connections between ActInf and stochastic optimal control. We reveal that, in contrast to standard cost and constraint-based solutions, ActInf gives rise to a minimization problem that includes both an information-theoretic surprise term and a model-predictive control cost term. We further show under which conditions both methodologies yield the same solution for estimation and control. For a case with linear Gaussian dynamics and a quadratic cost, we illustrate the performance of ActInf under varying system parameters and compare to classical solutions for estimation and control.
\end{abstract}
\textbf{Index terms} --- \emph{Active Inference, Stochastic Optimal Control, Message Passing, Factor Graphs}

\clearpage
\section{Introduction}
\label{sec:intro}

Bayesian graphical models (BGMs) constitute an important family of tools in signal processing, as they allow learning of models as well as inference of hidden states in a unified way, often with low complexity.  BGMs have been widely applied for a wide variety of estimation and detection problems in signal processing \cite{loeliger2004signal,loeliger2007factor}, with applications that include sensor networks \cite{Swami07}, surveillance \cite{ferri2017cooperative}, and  information-seeking control \cite{Meyer15}. They also naturally unify several standard methods from statistical estimation theory, such as the forward-backward algorithm \cite{rabiner1989tutorial}, the Kalman filter \cite{kschischang2001factor}, the particle filter \cite{ihler2005nonparametric}, and the Viterbi algorithm \cite{kschischang2001factor}. 

Beyond learning, estimation and detection, BGMs have also found applications in stochastic control problems, which involve not only estimation of the state of a system, but also  determination of suitable control actions in the presence of uncertainty.  Applications include control of vehicles, robots, factories, or teams of agents. Often, the control problem and inference/estimation problem are considered separate, whereby the controller assumes an estimate of the state and the inference occurs without knowledge of future control. Such a separation principle is only valid in certain cases, such as the celebrated linear quadratic Gaussian (LQG) control \cite{caines2018linear}. 
Over the past decades, several approaches have been proposed to unify inference and control 
\cite{karny_towards_1996,ziebart_maximum_2008,kappen_optimal_2012,watson_stochastic_2019,levine_reinforcement_2018,toussaint_robot_2009,hoffmann_linear_2017}, largely based on BGMs. 
The core ideas of these approaches can already be found in the early work \cite{karny_towards_1996}, which posed the role of a controller as follows: \emph{``A controller of a  stochastic system `shapes' the joint pdf describing the closed-loop behaviour. The `optimal' controller should make this joint pdf as close as possible to a desired pdf.''} With this in mind, \cite{karny_towards_1996} poses an ideal state distribution and control distribution, after which an optimized controller can be found be minimizing a Kullback-Leibler divergence (KLD). In terms of mathematical tractability, an important improvement was the use of Bayesian graphical models \cite{loeliger2007factor}, which led to the methods in \cite{kappen_optimal_2012,watson_stochastic_2019,hoffmann_linear_2017}. In \cite{kappen_optimal_2012}, a similar idea as \cite{karny_towards_1996} was proposed, which allowed the formulation of control cost as a KLD, which could be solved by approximate inference on the corresponding graphical model.
In \cite{watson_stochastic_2019}, a linear transformation of the control cost function is replaced by a log-likelihood function and an optimized controller is found by an expectation-maximization procedure over the corresponding factor graph. A similar idea was introduced in \cite{levine_reinforcement_2018,toussaint_robot_2009} where an artificial observation and associated likelihood was introduced so that state trajectories with highest posterior probability also have lowest associated control cost. In \cite{watson_stochastic_2019,toussaint_robot_2009} controllers similar to LQG were found. In \cite{hoffmann_linear_2017} the LQG control problem was targeted specifically, and under perfect knowledge of the current state, the exact LQG controller was recovered by an inference-based controller. It should be noted that some of the above works aim to find a policy (i.e. a mapping from state estimate to control) while others aim to determine an optimal control sequence. 

More generally, stochastic optimal control problems have been solved using a diverse range of approaches, where \emph{model-predictive control (MPC)} \cite{lee_model_2011} and \emph{reinforcement learning (RL)} \cite{sutton_reinforcement_2018} are arguably the most prominent approaches. When a model of the system is available, the control problem becomes a Markov decision process, which can, in principle, be solved through dynamic programming \cite{puterman_markov_2014}. If no model is available, RL can provide model-free solutions that learn state-action mappings from interactions with the system \cite{degris_model-free_2012}. Recently, there has been work combining these two approaches, originating either from the control theory community \cite{williams_information_2017} or the computer science community \cite{kamthe_data-efficient_2017}.

In addition to MPC and RL, a third and more recent path is the \emph{free energy principle (FEP)}, which originates from cognitive neuroscience as a way to explain biological behavior \cite{friston_free_2006,ramstead_answering_2018}. The main hypothesis is that agents (i) internalize a generative model (GM) of the system, and (ii) perceive and act in such a way as to minimize a free-energy bound on surprise relative to the GM. Interestingly, free-energy minimization is a concept that is also used in RL to encourage exploration and model building \cite{sallans_using_2001}.
Objective functions for any kind of system or application can be included in the GM in the form of a goal prior, which results in formulations of active inference (ActInf) \cite{friston_free-energy_2010}. Despite a large number of publications in the ActInf field including applications in robotics \cite{OliverLanillosheng_2019} and synthesis with reinforcement learning \cite{catal_bayesian_2019}, there have been only few efforts, e.g.,  \cite{hoffmann_linear_2017,ueltzhoffer_deep_2018,schwobel_active_2018,baltieri_active_2019,millidge2020relationship,imohiosen2020active}, to apply it to more traditional stochastic control settings, such as linear quadratic Gaussian (LQG) control.

In this paper, our main aim is to reveal the connections between inference and control over BGM from an ActInf perspective. Specifically, we have the following contributions:
\begin{enumerate}
    \item We propose an ActInf joint inference and control formulation that casts the control problem as an inference problem and explicitly encodes the control cost in the FEP framework;
    \item We show under which conditions the ActInf-based joint inference and control method yields the solution to the original stochastic optimal control problem; 
    \item We prove that LQG can be expressed as a special case of the proposed ActInf joint inference and control method.
\end{enumerate}

The article is structured as follows: In the remainder of this section, we provide an overview of the notation.  Sec.~\ref{sec:problem_statement} introduces the model and optimal control objective. Sec.~\ref{sec:active_inference} introduces the ActInf objective and further notation related to probabilistic model formulations. Sec.~\ref{sec:from_actinf_to_oc} formally relates the objective function of ActInf with stochastic optimal control. Sec.~\ref{sec:relationship} then applies these formulations to a LQG control problem, which is illustrated by the numerical results in Sec.~\ref{sec:simulation}. We conclude with Sec.~\ref{sec:conclusions}.

\subsection*{Notation}
We write a sequence of variables as $s_{t_1:t_2}= \{s_{t_1},\ldots, s_{t_2}\}$. At any current time $t$, we consider a sequence of states, observations and controls as $x=x_{0:t+T}$, $y =y_{1:t+T}$, $u=u_{0:t+T-1}$, with $x_k \in \mathbb{R}^{n_x}$, controls $u_k \in \mathbb{R}^{n_u}$, and observations $y_k \in \mathbb{R}^{n_y}$ respectively, with a time horizon of $T$ time-steps into the future. Note that the start and the end points between the state, observation and control sequence differ slightly. When explicitly required, we denote the realizations of the random variables, such as  (past) observed values, estimates and performed actions, by a bold script.

In order to easily distinguish between the past and future variables, we adopt the following convention: we divide the observations $y$ into past (including present) variables $\underline{y}_t = y_{1:t}$ and future variables $\overline{y}_t = y_{t+1:t+T}$. Similarly, the state sequence $x$ consists of $\underline{x}_t = x_{0:t}$ and $\overline{x}_t = x_{t+1:t+T}$. The control sequence $u$ consists of $\underline{u}_t = u_{0:t-1}$ and $\overline{u}_t = u_{t:t+T-1}$ (with present control included). For notational convenience, we drop the dependence on the current time $t$. For instance, we use $\overline{x}$ instead of $\overline{x}_t$. We use $s_{\setminus t}$ to denote the sequence obtained by removing $s_t$ from $s$. 

Similar to the notation for the sequences, some of the probability density functions (pdfs) are expressed using the notation $\underline{p}(.)$ and $\overline{p}_t(.)$ to emphasize functions of past and future variables, respectively. As usual, marginal and conditional pdfs associated with a given joint pdf are denoted using the same letter/subscript. For instance, the marginal obtained by marginalizing (i.e., integrating) $p_a(s_1,s_2)$ over $s_2$ is denoted by $p_a(s_1) = \int p_a(s_1,s_2) \d{s_2}$. To avoid clutter, we drop the distribution arguments (i.e., we write $p_a$ instead of $p_a(s_1,s_2)$) whenever these dependencies are clear from the context. 

\begin{table}
\caption{Common notations for distributions and functions.}
\label{table:notation:distribution}
\begin{centering}
\begin{tabular}{|l|l|l|}
\hline 
 & \emph{Short Description} & \emph{Normalized}\tabularnewline
\hline 
\hline 
$p$ & joint distribution & {yes}\tabularnewline
\hline 
$p_t$ & generative model at time $t$ & yes\tabularnewline
\hline 
$\tilde{p}$ & goal priors & {yes}\tabularnewline
\hline 
$f_{t}$ & goal-constrained generative model & no\tabularnewline
\hline 
$p_{p}$ & posterior distribution of hidden variables & yes\tabularnewline
\hline 
$q$ & belief / variational posterior & yes\tabularnewline
\hline 
$\pi_t$ & stochastic policy mapping & yes\tabularnewline
\hline 
\end{tabular}
\par\end{centering}
\end{table}

\section{System Model}
\label{sec:problem_statement}

\subsection{Dynamical System Model}
We consider the following dynamical system with the state-space model (SSM):
\begin{subequations}
\label{eq:process}
\begin{align}
    x_{t+1} & \sim p(x_{t+1} |x_t, u_t), \quad t \geq 0\,,\\
    y_t & \sim p(y_t |x_t), \quad t \geq 1\,,
\end{align}
\end{subequations}
where $x_0 \sim p(x_0)$ and $u_0=0$. Using the system definition in \eqref{eq:process}, the probabilistic system model for the state and outcome sequence for a given control sequence over a time window of $k \in [0, t+T]$ can be expressed as follows
\begin{align}
    p(y, x| u) = p(x_0)\!\!\!\! \prod_{k=0}^{t+T-1} \!\!\!\! p(y_{k+1} |x_{k+1}) p(x_{k+1} |x_k, u_k). \label{eq:system_model}
\end{align}
At time $t$,  we have the probabilistic system model
\begin{align}
    p_t(y, x| u) = \frac{p(\underline{y}_t = \underline{\bm{y}}_t, \overline{y}_t,  x| \underline{u}_t = \underline{\bm{u}}_t, \overline{u}_t)}{\int p(\underline{y}_t = \underline{\bm{y}}_t, \overline{y}_t,  x| \underline{u}_t = \underline{\bm{u}}_t, \overline{u}_t) \d{\overline{y}_t} \d{x}},\label{eq:system_model_t}
\end{align}
where  $\underline{y}_t$ and $\underline{u}_t$ are set to their realizations ($\underline{\bm{y}}_t$ and $\underline{\bm{u}}_t$). We will generally omit the explicit dependence on $\underline{\bm{y}}_t$ and $\underline{\bm{u}}_t$ and instead rely on the sub-script $t$  to indicate that past controls and observations are fixed in $p_t$. Since $p_t$ is obtained by plugging in the values of the realizations, we re-normalize. An overview of the common distributions used in this paper together with their normalization status is provided in Table~\ref{table:notation:distribution}.

\subsection{Control Objective}
\label{sec:control_objective}
We consider stochastic policy mappings in the form of $\pi_k(u_k | y_{1:t})$ from the set of measurements (up to the current time $t$) to the control at time $k$ where $k \in [t, t+T]$. The objective is to find the mappings $\pi_k$ that minimize the expected cost $\mathcal{J}_t$ over current and future states $x_k$ and controls $u_k$, defined as:
\begin{align}
    \mathcal{J}_t = \sum_{k=t}^{t+T} \mathbb{E}_{p_t, \pi_k}\left[\ell_k(x_k, u_k) \right]\,, \label{eq:cost}
\end{align}
where $p_t$ is the probabilistic system model as expressed in \eqref{eq:system_model_t}, and $\ell_k(x_k, u_k)\ge 0$ is the cost function at time-step $k$ that encodes the cost of being in state $x_k$ and applying the control $u_k$. The realization for the current control (action) is then determined using the stochastic policy mapping $\pi_t$. 

In particular, the control is $u_t=\bm{u}_{t,\pi}^*$, where
\begin{align}\label{eq:standard:action}
\bm{u}_{t,\pi}^* = g(\pi_t^*)
\end{align}
with 
\begin{align}
\pi_t^* = \arg \min_{\pi_t} \mathcal{J}_t
\end{align}
and where $g(\cdot)$ represents the mapping from the probability distribution to a single action $\bm{u}_t$, which can be chosen, for instance, as the mean or the mode of $\pi_t^*$ or as a sample (i.e. realization) from $\pi_t^*$ \cite{schwobel_active_2018,b_GladLjung,attias_planning_2003,sutton_reinforcement_2018}. This article considers a sliding horizon, i.e., after taking the action at the current time instant $t$ and obtaining the next observation, the stochastic policies are again determined by looking $T$ steps ahead. 

\emph{Example:} A typical cost function is the quadratic cost:
\begin{align}
    \ell_k(x_k, u_k) = \ell(x_k, u_k) = \tfrac{1}{2}x^{\T}_k Q x_k + \tfrac{1}{2}u^{\T}_k R u_k\,, \label{eq:quadratic_cost}
\end{align}
for  $Q \in \mathbb{R}^{n_x \times n_x}, Q \succeq 0$  and $R \in \mathbb{R}^{n_u \times n_u}, R \succ 0$.

\section{Active Inference}
\label{sec:active_inference}
In this section, we describe the ActInf approach and the FEP. The concepts and approaches in this section have similarities to the control as inference literature \cite{kappen_optimal_2012,watson_stochastic_2019,hoffmann_linear_2017}, but are here presented from the ActInf perspective. The main idea of ActInf is that at each time $t$, the controller minimizes the free energy functional  $\mathcal{F}_t[q]$, defined as\cite{friston_free_2006} $\mathcal{F}_t[q] = \KL[q||f_t]$, where  $\KL[q||f_t]$ is the Kullback-Leibler divergence, $q$ represents a variational distribution (the optimization variable) and $f_t$ represents the known generative model $p_t$ with substituted observations or with modifications with more general constraints. 
Each of these concepts will now be explained in detail. 
Minimization of the free energy, and the well-established framework of minimization of Bayesian surprise are closely connected.  We further discuss this relationship in Remark~\ref{rem:FEP_Suprise}. 

\subsection{Generative Model and Goal Priors}

\subsubsection{The Generative Model}
The notation introduced earlier allows us to concisely write the system model at time $t$ (\ref{eq:system_model_t}) in terms of a past and a future contribution:
\begin{subequations}
\label{eq:so}
\begin{align}
    p_t(y, x| u) &= p_t(\yf,\yp,\xf,\xp|\uf,\up ),\\
    &=p_t(\yp,\xp| \uf,\up) \, p_t(\yf,\xf | \yp, \xp,\uf,\up ),\\
    &= \underline{p}_t(\underline{y}, \underline{x} | \underline{u})\, \overline{p}_t(\overline{y}, \overline{x} | x_t, \overline{u})\,. \label{eq:so_past_future}
\end{align}
\end{subequations}
We note that the first factor depends on past controls and the second on the future controls. Both factors condition on the controls, and $p_t$ does not incorporate the control cost $\mathcal{J}_t$. 

\subsubsection{Goal Priors}
\label{sec:goal_priors}
The designer of the agent should govern the system behaviour towards desirable system states, e.g. the exit of a maze. In order to achieve this, ActInf introduces the concept of a \emph{prior belief on the future outcomes}
\cite{de_vries_factor_2017,parr_generalised_2018,van_de_laar_simulating_2019} (or equivalently referred as a goal prior) which constrains the system model \eqref{eq:system_model}. This goal prior is set by the designer of the agent and encodes the future states that are desirable, in other words the states that we want to be \emph{unsurprising}  for the agent. Actions selected as a result of free energy minimization will then move the agent as close as possible to these unsurprising (desired) states.     
A goal prior is added as an additional factor to the system model description \cite{parr_generalised_2018,van_de_laar_simulating_2019}, leading to the goal-constrained (unnormalized) GM:
\begin{align}
    f_t(y, x, \overline{u} | \underline{u}) = \underbrace{p_t(y, x| u)}_{\substack{\text{generative}\\\text{model}}}\, \underbrace{\tilde{p}(\overline{y}, x_t, \overline{x}, \overline{u})}_{\text{goal prior}}\,. \label{eq:gm_goals}
\end{align}

In order to relate the goal prior to the traditional control cost, a natural choice is:
\begin{align}
    \tilde{p}(\overline{y}, x_t, \overline{x}, \overline{u}) &= \frac{1}{\Gamma} \exp\left(-\lambda \sum_{k=t}^{t+T} \ell_k(x_k, u_k)\right), \label{eq:goal_prior}
\end{align}
where $\Gamma$ the the normalization constant and $\lambda\ge 0$ is the scaling factor. For the quadratic cost of \eqref{eq:quadratic_cost} the goal prior factors into independent Gaussians (i.e., consists of factors in the form $\propto \exp(-\lambda \tfrac{1}{2}x^{\T}_k Q x_k) \exp(-\lambda \tfrac{1}{2}u^{\T}_k R u_k)$), where the weighting matrices $Q$ and $R$ take the role of (scaled) precisions. A related probabilistic approach is described in \cite{toussaint_robot_2009}, where a binary reward is defined by using a cost function.

\subsection{Free Energy Objective}
\label{sec:free_energy_objective}

Consider the latent (hidden) variables at time $t$: $\yf, x, \uf$. 
Note that the state sequence $x$ is unknown for both the future and the past, whereas for the observations and the controls, only the future variables are unknown. Let us consider a variational posterior distribution $q(\yf, x, \uf|\yp,\up)$ defined over the latent variables. Here, the label {\it{variational}} refers to the fact that the objective function \eqref{eq:F_t} is optimized by {\it{variations}} in the conditional \cite{blei_variational_2017}. Note that $q(\yf, x, \uf|\yp,\up)$ is a posterior conditioned on the past observations and controls.  To avoid notational clutter, we adopt a mainstream notational convention in probabilistic inference where conditioning is dropped from the variational posterior distribution,  and represent  $q(\yf, x, \uf|\yp,\up)$ as $q(\yf, x, \uf)$ or simply as $q$.

The free energy $\mathcal{F}_t[q]$ is defined as follows \cite{friston_free_2006}:
\begin{align}
    \mathcal{F}_t[q] &= \KL[q||f_t]\,, \label{eq:F_t}
\end{align}
where  $\KL[q||f_t] \triangleq \int q(s) \log\!\left({q(s)}/{f_t(s)}\right)\d{s}$ is the Kullback-Leibler (KL) divergence (i.e., relative entropy). The KL divergence is an information-theoretic concept that quantifies how much one probability distribution differs from another distribution \cite{b_coverThomas}.
\begin{figure*}
    \centering
    \resizebox{\textwidth}{!}{

\begin{tikzpicture}
    [node distance=15mm,auto,>=stealth']


    \node (x_t_min) {$\dots$};
    \node[box, right of=x_t_min, node distance=18mm] (trans_t) {};
    \node[clamped, above of=trans_t] (u_t_min) {};
    \node[detbox, right of=trans_t] (eq_t) {$=$};
    \node[box, below of=eq_t, node distance=12mm] (obs_t) {};
    \node[clamped, below of=obs_t] (y_t) {};
    \node[detbox, right of=eq_t] (eq_tt) {$=$};
    \node[box, above of=eq_tt, node distance=17mm] (q_t) {};

    \node[box, right of=eq_tt] (trans_t_p1) {};
    \node[box, above of=trans_t_p1, node distance=17mm] (u_t) {};
    \node[detbox, right of=trans_t_p1] (eq_t_p1) {$=$};
    \node[box, below of=eq_t_p1, node distance=12mm] (obs_t_p1) {};
    \node[box, below of=obs_t_p1, node distance=17mm] (y_t_p1) {};    
    \node[right of=eq_t_p1] (dots) {$\dots$};

    \node[detbox, right of=dots, node distance=20mm] (eq_tt_pT) {$=$};
    \node[box, above of=eq_tt_pT, node distance=17mm] (q_t_pT) {};
    \node[box, right of=eq_tt_pT] (trans_t_pT) {};
    \node[box, above of=trans_t_pT, node distance=17mm] (u_T) {};
    \node[detbox, right of=trans_t_pT] (eq_t_pT) {$=$};
    \node[box, below of=eq_t_pT, node distance=12mm] (obs_t_pT) {};
    \node[box, below of=obs_t_pT, node distance=17mm] (y_t_pT) {};    
    \node[box, above of=eq_t_pT, node distance=17mm] (q_T) {};

    \draw (x_t_min) -- node[anchor=south]{$x_{t-1}$} (trans_t);
    \draw (u_t_min) -- node[anchor=west]{$u_{t-1}$} (trans_t);
    \draw (trans_t) -- (eq_t);
    \draw (eq_t) -- (obs_t);
    \draw (obs_t) -- node[anchor=west]{$y_t$} (y_t);
    \draw (eq_t) -- node[anchor=south]{$x_t$} (eq_tt);
    \draw (q_t) -- (eq_tt);

    \draw (eq_tt) -- (trans_t_p1);
    \draw (u_t) -- node[anchor=west]{$u_t$} (trans_t_p1);
    \draw (trans_t_p1) -- (eq_t_p1);
    \draw (eq_t_p1) -- (obs_t_p1);
    \draw (obs_t_p1) -- node[anchor=west]{$y_{t+1}$} (y_t_p1);
    \draw (eq_t_p1) -- node[anchor=south]{$x_{t+1}$} (dots);

    \draw (dots) -- node[anchor=south]{$x_{t+T-1}$} (eq_tt_pT);
    \draw (q_t_pT) -- (eq_tt_pT);
    \draw (eq_tt_pT) -- (trans_t_pT);
    \draw (u_T) -- node[anchor=west]{$u_{t+T-1}$} (trans_t_pT);
    \draw (trans_t_pT) -- (eq_t_pT);
    \draw (eq_t_pT) -- (obs_t_pT);
    \draw (obs_t_pT) -- node[anchor=west]{$y_{t+T}$} (y_t_pT);
    \draw (q_T) -- node[anchor=west]{$x_{t+T}$} (eq_t_pT);

    \draw[dashed] (-0.6,2.3) rectangle (3.8,-3.5);
    \draw[dashed] (4.3,0.6) rectangle (14.9,-1.8);
    \draw[dashed] (4.3,2.3) -- (15.9,2.3) -- (15.9,-3.5) -- (4.3,-3.5) -- (4.3,-2.3) -- (15.4,-2.3) -- (15.4,1.1) -- (4.3,1.1) -- (4.3,2.3);

    \node (present) at (1,-3.0) {$\underline{p}(\underline{y}, \underline{x} | \underline{u})$};
    \node (future) at (11.5,-1.4) {$\overline{p}(\overline{y}, \overline{x} | x_t, \overline{u})$};
    \node (goals) at (9,1.5) {$\tilde{p}(\overline{y}, x_t, \overline{x}, \overline{u})$};
\end{tikzpicture}}
    \caption{Forney-style factor graph representation of the goal-constrained generative model \eqref{eq:gm_goals} with indicated factorizations. Observations are indicated by small solid nodes.}
    \label{fig:gm_ffg}
\end{figure*}
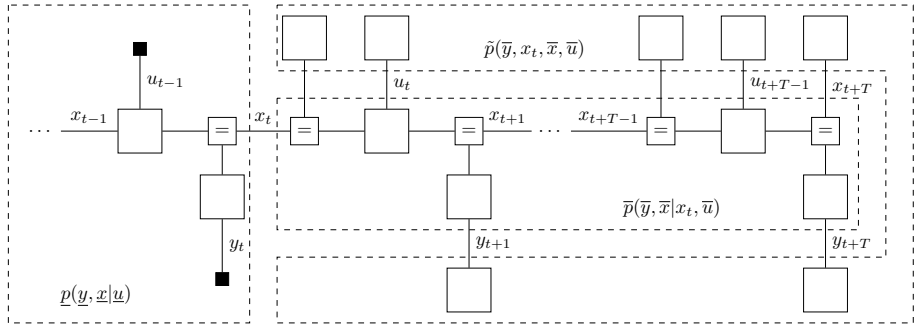
\begin{figure*}
    \centering
    \resizebox{\textwidth}{!}{

\begin{tikzpicture}
    [node distance=15mm]


    \node (x_t_min) {$\dots$};
    \node[box, right of=x_t_min, node distance=18mm] (trans_t) {};
    \node[clamped, above of=trans_t] (u_t_min) {};
    \node[detbox, right of=trans_t] (eq_t) {$=$};
    \node[box, below of=eq_t, node distance=12mm] (obs_t) {};
    \node[clamped, below of=obs_t] (y_t) {};
    \node[detbox, right of=eq_t, node distance=20mm] (eq_tt) {$=$};
    \node[box, above of=eq_tt] (q_t) {};

    \node[box, right of=eq_tt] (trans_t_p1) {};
    \node[box, above of=trans_t_p1] (u_t) {};
    \node[detbox, right of=trans_t_p1] (eq_t_p1) {$=$};
    \node[box, below of=eq_t_p1, node distance=12mm] (obs_t_p1) {};
    \node[box, below of=obs_t_p1] (y_t_p1) {};    
    \node[right of=eq_t_p1] (dots) {$\dots$};

    \node[detbox, right of=dots, node distance=20mm] (eq_tt_pT) {$=$};
    \node[box, above of=eq_tt_pT] (q_t_pT) {};
    \node[box, right of=eq_tt_pT] (trans_t_pT) {};
    \node[box, above of=trans_t_pT] (u_T) {};
    \node[detbox, right of=trans_t_pT] (eq_t_pT) {$=$};
    \node[box, below of=eq_t_pT, node distance=12mm] (obs_t_pT) {};
    \node[box, below of=obs_t_pT] (y_t_pT) {};    
    \node[box, above of=eq_t_pT] (q_T) {};

    \draw (x_t_min) -- node[anchor=north]{$x_{t-1}$} node[anchor=south]{$\rightarrow$} (trans_t);
    \draw (u_t_min) -- node[anchor=west]{$u_{t-1}$} node[anchor=east]{$\downarrow$} (trans_t);
    \draw (trans_t) -- node[anchor=south]{$\rightarrow$} (eq_t);
    \draw (eq_t) -- node[anchor=east]{$\uparrow$} (obs_t);
    \draw (obs_t) -- node[anchor=west]{$y_t$} node[anchor=east]{$\uparrow$} (y_t);
    \draw (eq_t) -- node[anchor=north]{$x_t$} (eq_tt);
    \draw (q_t) -- node[anchor=east]{$\downarrow$} (eq_tt);

    \draw (eq_tt) -- node[anchor=south]{$\rightarrow$} (trans_t_p1);
    \draw (u_t) -- node[anchor=west,pos=0.8]{$u_t$} (trans_t_p1);
    \draw (trans_t_p1) -- node[anchor=north]{$\leftarrow$} (eq_t_p1);
    \draw (eq_t_p1) -- node[anchor=east]{$\uparrow$} (obs_t_p1);
    \draw (obs_t_p1) -- node[anchor=west]{$y_{t+1}$} node[anchor=east]{$\uparrow$} (y_t_p1);
    \draw (eq_t_p1) -- node[anchor=south]{$x_{t+1}$} (dots);

    \draw (dots) -- node[anchor=south]{$x_{t+T-1}$} (eq_tt_pT);
    \draw (q_t_pT) -- node[anchor=east]{$\downarrow$} (eq_tt_pT);
    \draw (eq_tt_pT) -- node[anchor=north]{$\leftarrow$} (trans_t_pT);
    \draw (u_T) -- node[anchor=west]{$u_{t+T-1}$} node[anchor=east]{$\downarrow$} (trans_t_pT);
    \draw (trans_t_pT) -- node[anchor=north]{$\leftarrow$} (eq_t_pT);
    \draw (eq_t_pT) -- node[anchor=east]{$\uparrow$} (obs_t_pT);
    \draw (obs_t_pT) -- node[anchor=west]{$y_{t+T}$} node[anchor=east]{$\uparrow$} (y_t_pT);
    \draw (q_T) -- node[anchor=west]{$x_{t+T}$} node[anchor=east]{$\downarrow$} (eq_t_pT);

    \draw[dashed] (-0.5,1.9) rectangle (3.9,-3.1);

    \msg{up}{right}{eq_t}{eq_tt}{0.5}{A};
    \msg{left}{down}{u_t}{trans_t_p1}{0.42}{E};
    \msg{right}{up}{u_t}{trans_t_p1}{0.42}{D};
    \msg{down}{left}{dots}{eq_tt_pT}{0.5}{B};
    \msg{down}{left}{eq_t_p1}{dots}{0.5}{C};
\end{tikzpicture}}
    \caption{Forney-style factor graph specification of the inference algorithm on the goal-constrained generative model. Here, message \protect\smallcircled{A} represents a state estimate that summarizes past control and observations. The product of messages \protect\smallcircled{D} and \protect\smallcircled{E} yields a posterior belief over the current control, the mode of which is taken as the present action.}
    \label{fig:inference_ffg}
\end{figure*}
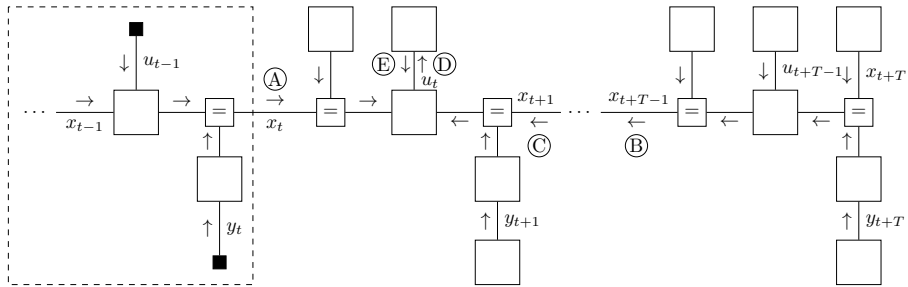
By straightforward manipulation, the free energy can be decomposed as follows: 
\begin{align}
    \mathcal{F}_t[q] &= \underbrace{-\log Z}_{\text{surprise}} + \underbrace{\mathbb{E}_q\!\left[\log \frac{q(\overline{y}, x, \overline{u})}{p_p(\overline{y}, x, \overline{u} | \underline{y}, \underline{u})} \right]}_{\text{posterior divergence}}\,, \label{eq:surprise}\\
    &={-\log Z+ \KL[q||p_p],
    \label{eq:surprise:KL}
    }
\end{align}
where $p_p$ denoted the exact (Bayesian) posterior, and where $Z = \int f_t(y, x, \overline{u} | \underline{u}) \d{\overline{y}} \d{x} \d{\overline{u}}$, with substituted past observations $\underline{y}$ and controls $\underline{u}$; and  $p_p(\overline{y}, x, \overline{u} | \underline{y}, \underline{u}) = \frac{1}{Z} f_t(y, x, \overline{u} | \underline{u}) $. 
Since the posterior (KL) divergence term is always positive, the free energy provides an upper bound on the exact (Bayesian) surprise. This decomposition is often employed to justify the use of free energy as a tool for (approximate) inference and model selection \cite{attias2000variational}. 

\begin{rem}[Relation between FEP and surprise] \label{rem:FEP_Suprise}
	Generally, the generative model $p$ is a pdf over hidden states (say $x$) and observations (say $y$), while $q$ is a pdf only over hidden states. Hence, $p(x,y)=p(y|x)p(x)$, in which $p(x)$ represents a prior. Substituting an observation $y = \bm{y}$ in the model, yields $f_t(x) = p(y=\bm{y} | x)p(x)$, which represents the product of a likelihood and the prior. We are interested in obtaining a posterior belief $p_p(x) = f_t(x)/Z$. However, the normalizing constant (Bayesian evidence) $Z=\int f_t(x)\d{x}$ is often intractable to compute, because it involves an integral over all hidden state configurations. As a result, it is often prohibitively expensive (in terms of computational power) to obtain an exact solution for the posterior $p_p$.
	Instead, posterior inference is often cast as a free energy optimization problem, where the free energy factorizes as $\mathcal{F}_t[q] = -\log Z + \KL[q||p_p]$ as in \eqref{eq:surprise}--\eqref{eq:surprise:KL}. Minimization of $\mathcal{F}_t$ thus maximizes Bayesian evidence, while minimizing posterior divergence, making $q$ a close approximation to the (usually intractable) posterior $p_p$.
\end{rem}

\subsection{Control}

At time $t$, the objective is to find the $q$ that minimizes the free energy, i.e.,
\begin{align}
    q^*_t = \arg \min_{q}\,\mathcal{F}_t[q].
\end{align}
Looking at  \eqref{eq:gm_goals} and \eqref{eq:F_t}, we observe that after optimization, the variational distribution $q^*_t$ simultaneously accounts for the constraints enforced by the system model \eqref{eq:system_model} and the goal prior \eqref{eq:goal_prior}. The posterior for the current control is obtained by marginalization, i.e., $q_t^*(u_t) = \idotsint q^*_t(\overline{y}, x, \overline{u}) \d{\overline{y}}, \d{x}, \d{\overline{u}_{\setminus t}}$, where $\overline{u}_{\setminus t}$ denotes the sequence obtained by removing $u_t$ from $\overline{u}$.
The current action is then chosen as
\begin{align}\label{eq:actinf:action}
\bm{u}_{t,q}^*= g(q_t^*)\,,
\end{align}
where $g(.)$ is the same as in \eqref{eq:standard:action}.

\subsection{Free Energy Minimization by Message Passing on a Forney-style factor graph}
\label{sec:fe_by_mp}
It is instructive to separate inference relating to the past/present from the inference relating to the future. To this end, we substitute the GM $f_t(y, x, \overline{u} | \underline{u})$ of \eqref{eq:gm_goals} into \eqref{eq:F_t} and use \eqref{eq:so_past_future} to factorize the free energy in the following form with separate contributions from the present ($\mathcal{V}_t[q]$) and (expected) future ($\mathcal{G}_t[q]$):
\begin{align}
    \mathcal{F}_t[q] &= \underbrace{\mathbb{E}_q\!\left[\log \frac{{q}(\underline{x})}{\underline{p}_t(\underline{y}=\underline{\bm{y}}, \underline{x}| \underline{u}=\underline{\bm{u}})}\right]}_{\mathcal{V}_t[q]} + \underbrace{\mathbb{E}_q\!\left[\log \frac{{q}(\overline{y}, \overline{x}, \overline{u} | \underline{x})}{\overline{p}_t(\overline{y}, \overline{x} | x_t, \overline{u})\,\tilde{p}(\overline{y}, x_t, \overline{x}, \overline{u})}\right]}_{\mathcal{G}_t[q]}\,. \label{eq:FVG}
\end{align}

In practice, the optimization of $\mathcal{F}_t$ is often intractable and a specific choice for the factorization of $q$ is made to aid computation \cite{bishop2006pattern}. In our case, due to the factorization of the GM, we can optimize $\mathcal{F}_t$ exactly by belief propagation over the factor graph of the GM \cite{GBP2005,heskes2003stable}. To minimize $\mathcal{V}_t$ and $\mathcal{G}_t$, a Forney-style factor graph (FFG) offers a convenient visual representation of a factorized function \cite{forney_codes_2001}, and is especially well-suited for representing probabilistic models \cite{loeliger_introduction_2004}. In an FFG, edges represent variables and nodes (factors) represent relations between variables. The FFG representation of the GM \eqref{eq:gm_goals} with substituted factorizations \eqref{eq:system_model} and included goal priors \eqref{eq:goal_prior} is drawn in Fig.~\ref{fig:gm_ffg}. The free energy \eqref{eq:F_t} is then minimized by message passing \cite{heskes2003stable,GBP2005,dauwels_variational_2007,van_de_laar_simulating_2019} on the FFG representation of the goal-constrained GM. Message passing can be interpreted as first minimizing $\mathcal{V}_t$ and then minimizing a modified version of $\mathcal{G}_t$, based on the outcome of minimizing $\mathcal{V}_t$ \cite{loeliger2007factor}.

\subsubsection*{Minimizing $\mathcal{V}_t[q]$} Message passing  yields a message $\smallcircled{A}$ (Fig.~\ref{fig:inference_ffg}) that represents the current state estimate, given past observations and controls. This message summarizes the information contained within the dashed box:
\begin{align}
    \mu_{\smallcircled{A}}(x_t) &\triangleq \int \underline{p}_t(\underline{y}=\underline{\bm{y}}, \underline{x} | \underline{u}=\underline{\bm{u}}) \d{x_{0:t-1}} \label{eq:mu_a_integral}\,.
\end{align}
From the perspective of stochastic optimal control, message $\smallcircled{A}$ connects with the estimator. Moreover, for a linear Gaussian state-space model, the recursive message updates for computing $\smallcircled{A}$ constitute a Kalman filter \cite{korl2005factor}. 

\subsubsection*{Minimizing $\mathcal{G}_t[q]$}
In order to minimize $\mathcal{V}_t[q]+\mathcal{G}_t[q]$, we re-normalize the  message $\mu_{\smallcircled{A}}(x_t)$ to obtain a prior $p_e(x_t)$, i.e.,
\begin{align}\label{eq:def:pe}
	p_e(x_t) \triangleq \frac{1}{C_e}  \mu_{\smallcircled{A}}(x_t)
\end{align}
where 
\begin{align}\label{eq:def:ce}
	C_e =\int \underline{p}_t(\underline{y}=\underline{\bm{y}}, \underline{x} | \underline{u}=\underline{\bm{u}}) \d{x_{0:t}}.
\end{align}
Here, the subscript $e$ emphasizes the fact that $p_e$ represents the pdf of an estimate (of the current state). We then define a modified objective for the expected future
\begin{align}
    \tilde{\mathcal{G}}_t[q] =\mathbb{E}_q\!\left[\log \frac{{q}(\overline{y}, \overline{x}, \overline{u} , {x}_t)}{p_e(x_t) \overline{p}_t(\overline{y}, \overline{x} | x_t, \overline{u})\,\tilde{p}(\overline{y}, x_t, \overline{x}, \overline{u})}\right], \label{eq:G_tilde}
\end{align}
which can again be minimized by message passing. This yields messages $\smallcircled{B}$ -- $\smallcircled{E}$ (Fig.~\ref{fig:inference_ffg}) by a backward recursion (smoothing pass) over the GM of future variables. The product of $\smallcircled{D}$ and $\smallcircled{E}$ then leads to a marginal belief $q_t^*(u_t)$.  Then,  the current control action is obtained using \eqref{eq:actinf:action}.

\section{From Active Inference To Stochastic Optimal Control}
\label{sec:from_actinf_to_oc}
The main question we're interested in is the following:  ``\emph{When does \eqref{eq:actinf:action} provide the same control actions as \eqref{eq:standard:action}?}''. In other words, when can the ActInf framework be used to solve the traditional stochastic control problem? Below, we investigate this question. Since the goal priors only appear in $\tilde{\mathcal{G}}_t$ and $\mathcal{V}_t$ can be minimized independently, we focus exclusively on the minimization of $\tilde{\mathcal{G}}_t$. We formulate two conditions under which minimizing $\tilde{\mathcal{G}}_t$ reduces to minimizing the stochastic optimal control objective \eqref{eq:cost}. 

Note that  $\tilde{\mathcal{G}}_t[q]$ can be written as
\begin{align*}
	\tilde{\mathcal{G}}_t[q] =\mathbb{E}_q\!\left[\log \frac{{q}(\overline{y}, \overline{x}, \overline{u} , {x}_t)}{p_e(x_t) \overline{p}_t(\overline{y}, \overline{x} | x_t, \overline{u})}\right]-\mathbb{E}_q\!\left[\log \tilde{p}(\overline{y}, x_t, \overline{x}, \overline{u}) \right]\,.
\end{align*}
Then, minimizing $\tilde{\mathcal{G}}_t[q]$ is equivalent to minimizing $\mathcal{G}^{\dagger}_t[q]$:
\begin{align}
	\mathcal{G}^{\dagger}_t[q] = \mathbb{E}_q\!\left[\log \frac{{q}(\overline{y}, \overline{x}, \overline{u} , {x}_t)}{p_e(x_t) \overline{p}_t(\overline{y}, \overline{x} | x_t, \overline{u})}\right]\! \!+\! \lambda \mathbb{E}_q\!\left[\sum_{k=t}^{t+T}\ell(x_k, u_k)\right]\,, \label{eq:G_t_step}
\end{align}
where we substituted the goal prior from \eqref{eq:goal_prior} and omitted the additive constants that do not depend on the optimization variables.

A striking difference between the optimal control and ActInf objective is that the optimal control objective \eqref{eq:cost} involves an expectation w.r.t. the system model $p_t$ and policy mapping $\pi$, while the free energy involves an expectation w.r.t. the variational distribution $q$. In order to compare the solutions, we need to create an equal footing.

\subsection{Rewriting $\mathcal{G}^{\dagger}_t$}
We start by noting that all arguments of $q(\yf, x, \uf)$ that are not within the expectation brackets in \eqref{eq:G_t_step} are marginalized, i.e.,  $\xp_{\setminus t}$ is marginalized. Therefore, $\tilde{\mathcal{G}}_t$ is effectively only optimized with respect to $q(\yf,\xf,\uf,x_t)$.
We then rewrite the variational distribution in terms of the policy by making use of a region-based approximation \cite{cowell1998introduction,GBP2005}. Note that, for a model that is a tree (which is the case for $f_t$), the region-based approximation is exact. Without loss of generality, we write:
\begin{align}
    & q(\overline{y},\overline{x},\overline{u},x_t)=\frac{\prod_{k=t}^{t+T-1}q(y_{k+1}, x_k, x_{k+1}, u_k)}{\prod_{k=t+1}^{t+T-1}q(x_k)} \label{eq:factors} \\
    &= \underbrace{\left[\prod_{k=t}^{t+T-1}q(u_k)\right]}_{\overline{\pi}(\overline{u})}\, \underbrace{\left[\frac{\prod_{k=t}^{t+T-1}q(y_{k+1},x_{k},x_{k+1}|u_k)}{\prod_{k=t}^{t+T-1}q(x_k)}\right]}_{\phi(\overline{y},\overline{x}|x_t, \overline{u})}\, {q(x_t)}\,, \notag
\end{align}
where we simply applied the Bethe factorization to write the variational distribution in terms of a control posterior $\overline{\pi}$, a system posterior $\phi$, and the current-state posterior ${q}(x_t)$.

We now use \eqref{eq:factors} to rewrite the second term of \eqref{eq:G_t_step}, as:
\begin{subequations}
\begin{align}
    & \lambda \mathbb{E}_q\!\left[\sum_{k=t}^{t+T}\ell(x_k, u_k)\right] =  
     \lambda \mathbb{E}_{q}\!\left[\frac{p_e(x_t)}{p_e(x_t)}\frac{\overline{p}_t(\overline{y}, \overline{x}|x_t, \overline{u})}{\overline{p}_t(\overline{y}, \overline{x}|x_t, \overline{u})}\sum_{k=t}^{t+T}\ell(x_k, u_k)\right] \\
     &= 
     \lambda \mathbb{E}_{\pif}\!\left[ \mathbb{E}_{q(x_t), \phi}\!\left[\frac{p_e(x_t)}{p_e(x_t)}\frac{\overline{p}_t(\overline{y}, \overline{x}|x_t, \overline{u})}{\overline{p}_t(\overline{y}, \overline{x}|x_t, \overline{u})}\sum_{k=t}^{t+T}\ell(x_k, u_k) \right] \right]\,,
     \label{eq:G_t_cost_term:tower} \\
    &= \lambda \mathbb{E}_{p_e, \overline{p}_t, \overline{\pi}}\!\left[\frac{q(x_t)}{p_e(x_t )}\frac{\phi(\overline{y}, \overline{x}|x_t, \overline{u})}{\overline{p}_t(\overline{y}, \overline{x}|x_t, \overline{u})}\sum_{k=t}^{t+T}\ell(x_k, u_k)\right]\,, \label{eq:G_t_cost_term} 
\end{align}
\end{subequations}
where in \eqref{eq:G_t_cost_term:tower} we made the expectations due to different terms of $q(\yf,\xf,\uf,x_t) =  \pif (\uf) \phi(\yf,\xf|x_t) q(x_t)$ from \eqref{eq:factors} explicit and in the last step we interchanged distributions in the expectation subscript with distributions in the numerators, i.e., we used $\E_q\!\left[\frac{p(s)}{p(s)} f(s)\right] = \E_p\!\left[\frac{q(s)}{p(s)} f(s)\right]$ for a function $f$ and probability distributions $q$ and $p$.

We now turn to the first term of \eqref{eq:G_t_step}. Again using the factorization $q(\yf,\xf,\uf,x_t) =  \pif (\uf) \phi(\yf,\xf|x_t) q(x_t)$ from  \eqref{eq:factors}, we rewrite this first term as the sum of the negative policy entropy and a posterior divergence:
\begin{align}
    & \mathbb{E}_q\!\left[\log \frac{{q}(\overline{y}, \overline{x}, \overline{u} , {x}_t)}{p_e(x_t) \overline{p}_t(\overline{y}, \overline{x} | x_t, \overline{u})}\right] = \label{eq:G_t_div_term}\\
    &\quad \mathbb{E}_q\!\left[\log \overline{\pi}(\overline{u}) \right] + D(q(x_t) \Vert p_e(x_t)) +
    \mathbb{E}_q\!\!\left[\log \frac{\phi(\overline{y}, \overline{x}| x_t, \overline{u})}{\overline{p}_t(\overline{y}, \overline{x} | x_t, \overline{u})}\right]\!. \notag
\end{align}

Substituting \eqref{eq:G_t_cost_term} and \eqref{eq:G_t_div_term} in \eqref{eq:G_t_step} then reveals the full expression for the ActInf controller objective:
\begin{align}
	\label{eq:G_t_fact}
   	\mathcal{G}^{\dagger}_t[q] &= \mathbb{E}_q\!\left[\log \overline{\pi}(\overline{u}) \right] + D(q(x_t) \Vert p_e(x_t)) + \mathbb{E}_q\!\!\left[\log \frac{\phi(\overline{y}, \overline{x}| x_t, \overline{u})}{\overline{p}_t(\overline{y}, \overline{x} | x_t, \overline{u})}\right] + \notag\\
    &\quad \lambda \mathbb{E}_{p_e, \overline{p}, \overline{\pi}}\!\left[\frac{q(x_t)}{p_e(x_t )}\frac{\phi(\overline{y}, \overline{x}|x_t, \overline{u})}{\overline{p}_t(\overline{y}, \overline{x}|x_t, \overline{u})}\sum_{k=t}^{t+T}\ell(x_k, u_k)\right]\,.
\end{align}

\begin{rem}[Interpretation of the FEP objective]\label{rem:FEP}
	The FEP objective can be seen as a trade-off between two terms: one pulls $q$ towards $p$ (under uninformative future values for the controls) and another that minimizes the cost $\sum_{k=t}^{t+T}\ell(x_k, u_k)$. Minimization of only the first three terms in \eqref{eq:G_t_fact} (i.e. the case with $\lambda=0$) leads to an undetermined  variational distribution $q^*$. 
	Namely, because the first three terms in \eqref{eq:G_t_fact} directly stem from the first term of  Eqn.~\eqref{eq:G_t_step}, we have ${{q}^*(\overline{y}, \overline{x}, \overline{u} , {x}_t)} = {p_e(x_t) \overline{p}_t(\overline{y}, \overline{x} | x_t, \overline{u})}$.  Then, we have $\mathcal{G}^{\dagger}_t[q^*]=0$ (with $\lambda=0$).

	Conversely, minimization of only the last term in \eqref{eq:G_t_fact} (with $\lambda > 0$) leads to a degenerate variational distribution $q^*$ with mass only at a global minimizer of $ \sum_{k=t}^{t+T}\ell(x_k, u_k)$, 
	because this last term is equal to $\mathbb{E}_q\!\left[\sum_{k=t}^{t+T}\ell(x_k, u_k)\right]$ (recall that the last term in \eqref{eq:G_t_fact} is only a rewritten version of the last term in \eqref{eq:G_t_step}). In particular, while minimizing $\mathbb{E}_q\!\left[\sum_{k=t}^{t+T}\ell(x_k, u_k)\right]$ over $q$, since there are no other constraints on $q$, and $q$ can be directly chosen as a distribution with point mass at a global minimizer of $ \sum_{k=t}^{t+T}\ell(x_k, u_k)$. For instance, with $\ell(.)$ defined as \eqref{eq:quadratic_cost}, $q$ with point mass at $x_k=0$, $u_k=0$ $\forall k$ is a minimizer. 
\end{rem}

We now present two sets of conditions under which the actions chosen by the ActInf agent by \eqref{eq:actinf:action} are the same as the stochastic control actions chosen using \eqref{eq:standard:action}.

\subsection{First Condition: Deterministic Model with Point-Estimate}
Let $\mode(\cdot)$ represent the mode of a distribution, where ties between multiple modes (if any) are resolved by uniform random selection.
\begin{thm}
    \label{thm:deterministic}
    Let (i) $\lambda > 0$, (ii) $\phi = \overline{p}_t$, (iii) $q_e = p_e$, and (iv) $g (\cdot) =\mode(\cdot)$.  Then, 
    $\bm{u}_{t,q}^*=\bm{u}_{t,\pi}^*$.
\end{thm}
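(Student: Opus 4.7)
The plan is to reduce the ActInf objective $\mathcal{G}^{\dagger}_t$ under the equality conditions (ii)--(iii) to a Kullback--Leibler divergence whose Boltzmann target is set by the standard-OC cost, and then to match the mode of its $u_t$-marginal with the mode of the standard-OC optimal policy. First I would substitute $\phi=\overline{p}_t$ and $q(x_t)=p_e(x_t)$ directly into \eqref{eq:G_t_fact}: (ii) annihilates the third term $\mathbb{E}_q[\log(\phi/\overline{p}_t)]$ and collapses the ratio $\phi/\overline{p}_t$ in the fourth term to $1$, while (iii) kills $\KL[q(x_t)\Vert p_e(x_t)]$ and collapses $q(x_t)/p_e(x_t)$ to $1$. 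What remains is
\[
\mathcal{G}^{\dagger}_t[\overline{\pi}]\;=\;\mathbb{E}_{\overline{\pi}}\!\left[\log\overline{\pi}(\overline{u})\right] + \lambda\,\mathbb{E}_{p_e,\overline{p}_t,\overline{\pi}}\!\left[\sum_{k=t}^{t+T}\ell(x_k,u_k)\right]\;=\;\mathbb{E}_{\overline{\pi}}\!\left[\log\overline{\pi}(\overline{u})+\lambda\,C(\overline{u})\right],
\]
where $C(\overline{u}):=\mathbb{E}_{p_e(x_t)\,\overline{p}_t(\overline{x},\overline{y}\mid x_t,\overline{u})}\!\left[\sum_k\ell(x_k,u_k)\right]$ is precisely the stochastic-OC cost $\mathcal{J}_t$ of \eqref{eq:cost} viewed as a function of the open-loop control sequence $\overline{u}$.

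Second, I would complete the logarithm to recognise the reduced objective as $\mathcal{G}^{\dagger}_t[\overline{\pi}]=\KL[\overline{\pi}\Vert\tilde{\pi}]-\log Z$, with $\tilde{\pi}(\overline{u})\propto\exp(-\lambda\,C(\overline{u}))$ and normalizer $Z=\int\exp(-\lambda C)\,\mathrm{d}\overline{u}$. Condition (i) $\lambda>0$ makes $\tilde{\pi}$ a proper probability distribution, so the KL inequality gives the unique minimizer $\overline{\pi}^{*}=\tilde{\pi}$; its joint mode sits at $\overline{u}^{*}:=\arg\min_{\overline{u}}C(\overline{u})$, which is precisely the minimizer of $\mathcal{J}_t$ over open-loop sequences. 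Since the standard-OC optimal stochastic policy is then a Dirac at $\overline{u}^{*}$, applying $g=\mode$ from (iv) yields $\bm{u}^{*}_{t,\pi}=u_t^{*}$, the $t$-th coordinate of $\overline{u}^{*}$.

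Third, to identify the ActInf action I would use the Bethe factorization \eqref{eq:factors}: since $f_t$ is tree-structured (Fig.~\ref{fig:gm_ffg}) the Bethe form is exact and forces $\overline{\pi}^{*}(\overline{u})=\prod_k q^{*}(u_k)$, so the marginal belief $q_t^{*}(u_t)$ is simply the factor $q^{*}(u_t)$. In the deterministic-model, point-estimate regime indicated by the subsection title, $C(\overline{u})$ is a deterministic function and the mean-field factor $q^{*}(u_t)\propto\exp\bigl(-\lambda\,\overline{C}_t(u_t)\bigr)$, with $\overline{C}_t(u_t):=\mathbb{E}_{\prod_{j\ne t}q^{*}(u_j)}[C(\overline{u})\mid u_t]$, has its mode at the coordinate-wise minimizer of $C$, which coincides with $u_t^{*}$. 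Applying $g=\mode$ on both sides then delivers $\bm{u}^{*}_{t,q}=u_t^{*}=\bm{u}^{*}_{t,\pi}$.

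The delicate step is this last identification: marginalization does not preserve modes in general, so bridging ``joint mode of $\overline{\pi}^{*}$ at $\overline{u}^{*}$'' with ``marginal mode of $q^{*}(u_t)$ at $u_t^{*}$'' is where the Bethe factorization and the deterministic/point-estimate structure really do the work. In the linear-Gaussian quadratic LQG setting of Sec.~\ref{sec:relationship} this step is automatic, because $\overline{\pi}^{*}$ is Gaussian and mode, mean, and coordinate-marginal mean of the joint all agree; for general deterministic dynamics one would need strict convexity (or symmetry) of $C$ to conclude that the coordinate-wise minimizer of the mean-field $\overline{C}_t$ agrees with the $t$-th coordinate of $\arg\min C$.
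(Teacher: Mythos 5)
Your opening move coincides with the paper's: substituting (ii) and (iii) into \eqref{eq:G_t_fact} strips the objective down to $\mathbb{E}_{\overline{\pi}}\!\left[\log\overline{\pi}(\overline{u})\right]+\lambda\,\mathbb{E}_{p_e,\overline{p}_t,\overline{\pi}}\!\left[\sum_k\ell(x_k,u_k)\right]$, and your identification of the second term with the control cost matches the paper's (which derives $\mathbb{E}_{p_e,\overline{p}_t,\overline{\pi}}\!\left[\sum_k\ell\right]=C_e\,\mathcal{J}_t$ explicitly via \eqref{eq:def:pe}--\eqref{eq:def:ce} and \eqref{eq:so_past_future}; you assert it, which is fine up to the benign constant $C_e$). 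After that the routes diverge. The paper disposes of the entropy term by arguing that, because $g=\mode$ only extracts the mode of the control posterior, the negative policy entropy does not affect action selection; it absorbs that term into a constant, leaving $\mathcal{G}^{\dagger}_t=\lambda C_e\mathcal{J}_t+C$, so the two optimization problems are proportional and the equality of actions is immediate. You instead keep the entropy term, complete it to $\KL[\overline{\pi}\Vert\tilde{\pi}]-\log Z$ with $\tilde{\pi}\propto\exp(-\lambda C)$, and read off a Boltzmann policy.

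The genuine gap is the one you flag yourself in your last paragraph, and it is fatal to the proof as a proof of Theorem~\ref{thm:deterministic} \emph{as stated}: the hypotheses are only (i)--(iv), with no assumption of deterministic dynamics, convexity of $C$, or Gaussianity. Your step from ``the joint mode of $\overline{\pi}^{*}=\tilde{\pi}$ sits at $\arg\min_{\overline{u}}C$'' to ``the marginal/mean-field factor $q^{*}(u_t)$ has its mode at the $t$-th coordinate of $\arg\min_{\overline{u}}C$'' does not hold in general, since marginalization does not preserve modes and the minimizer of the averaged cost $\overline{C}_t$ need not agree with the corresponding coordinate of the joint minimizer. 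Appealing to ``the deterministic-model, point-estimate regime indicated by the subsection title'' imports the setting of Corollary~\ref{cor:1}, which is only a sufficient condition for (ii)--(iii), not an additional hypothesis of the theorem. (A version of this difficulty is latent in the paper's own proof as well --- declaring the entropy term mode-irrelevant is precisely the step that sweeps it away --- but the paper's route reduces both problems to the \emph{same} functional before extracting the mode, whereas yours requires comparing the mode of a Boltzmann marginal with that of a Dirac policy.) To repair your version, either adopt the paper's move --- argue that under $g=\mode$ the entropy term may be dropped, so $\overline{\pi}$ is effectively optimized against $\lambda C_e\mathcal{J}_t$ alone --- or restrict to costs (e.g.\ strictly convex, as in the LQG case) for which joint and marginal modes provably coincide, and say so as an added hypothesis.
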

\begin{proof}
	See Appendix~\ref{sec:pf:thm:deterministic}.     
\end{proof}

The below result shows that Theorem~\ref{thm:deterministic} implies that the optimal solution is recovered for deterministic systems with a point-estimate.
\begin{cor}
    \label{cor:1}
    The conditions (ii) $\phi = \overline{p}_t$, and (iii) $q_e = p_e$ of Theorem~\ref{thm:deterministic} occur in the case of a deterministic model $\overline{p}_t$ in conjunction with a point estimate for the current state.
\end{cor}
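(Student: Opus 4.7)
The plan is to argue that in the deterministic-model / point-estimate regime the equalities $\phi = \overline{p}_t$ and $q(x_t) = p_e(x_t)$ required by Theorem~\ref{thm:deterministic} are not additional assumptions but automatic consequences of requiring the rewritten objective \eqref{eq:G_t_fact} to have finite value. The rewritten ActInf objective contains the KL term $D(q(x_t)\Vert p_e(x_t))$ and an expected KL term involving $\phi$ and $\overline{p}_t$; since KL divergence is infinite whenever its first argument fails to be absolutely continuous with respect to its second, a finite-energy minimizer cannot violate either condition once the right-hand distributions become Dirac (singular) measures.

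First I would dispatch condition (iii). A point estimate for the current state means $p_e(x_t) = \delta(x_t - \hat{x}_t)$ for some $\hat{x}_t$. The term $D\bigl(q(x_t)\,\Vert\,p_e(x_t)\bigr)$ in \eqref{eq:G_t_fact} is finite only when $q(x_t)$ is supported in $\{\hat{x}_t\}$, which forces $q(x_t) = \delta(x_t - \hat{x}_t) = p_e(x_t)$. Hence any $q$ yielding finite $\mathcal{G}^{\dagger}_t$ automatically satisfies (iii).

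Next I would address condition (ii). A deterministic model means that the transition $p(x_{k+1}\mid x_k, u_k)$ and emission $p(y_{k+1}\mid x_{k+1})$ factors are Dirac masses, so $\overline{p}_t(\overline{y},\overline{x}\mid x_t,\overline{u})$ is concentrated on the unique trajectory determined by $(x_t,\overline{u})$. Using the factorization $q(\overline{y},\overline{x},\overline{u},x_t) = \overline{\pi}(\overline{u})\,\phi(\overline{y},\overline{x}\mid x_t,\overline{u})\,q(x_t)$ from \eqref{eq:factors}, the third term in \eqref{eq:G_t_fact} can be written as
\begin{align*}
\mathbb{E}_q\!\left[\log \tfrac{\phi(\overline{y},\overline{x}\mid x_t,\overline{u})}{\overline{p}_t(\overline{y},\overline{x}\mid x_t,\overline{u})}\right]
= \mathbb{E}_{\overline{\pi},\,q(x_t)}\!\left[D\bigl(\phi(\,\cdot\,\mid x_t,\overline{u})\,\Vert\,\overline{p}_t(\,\cdot\,\mid x_t,\overline{u})\bigr)\right],
\end{align*}
so finiteness requires $\phi(\,\cdot\,\mid x_t,\overline{u})$ to be absolutely continuous with respect to the singular $\overline{p}_t(\,\cdot\,\mid x_t,\overline{u})$ for $(\overline{\pi},q(x_t))$-almost every $(x_t,\overline{u})$, and this forces $\phi = \overline{p}_t$ on that support.

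The main (minor) obstacle is justifying that one may restrict attention to finite-energy $q$: I would note that at least one admissible $q$ exists, namely the one obtained by literally taking $q(x_t)=p_e(x_t)$ and $\phi=\overline{p}_t$, so all three KL-type contributions are finite and the minimum of $\mathcal{G}^{\dagger}_t$ is finite. Any candidate minimizer must therefore satisfy the absolute-continuity constraints above, and conditions (ii) and (iii) drop out immediately.
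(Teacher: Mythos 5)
Your argument is correct and is essentially the paper's own proof: both show that a Dirac $p_e$ and a Dirac-concentrated $\overline{p}_t$ force the KL terms in \eqref{eq:G_t_div_term} to be infinite unless $q(x_t)=p_e(x_t)$ and $\phi=\overline{p}_t$, so any minimizer must satisfy (ii) and (iii). Your added observation that a finite-value candidate $q$ exists (so the minimum is indeed finite and the contradiction argument has traction) is a small but worthwhile tightening that the paper leaves implicit.
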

\begin{proof}
    In the case of a deterministic model,  $\overline{p}_t$ of \eqref{eq:process} is constrained   to delta functions; i.e.,  $p(x_{k+1} | x_k, u_k) = \delta(x_{k+1} - f_x(x_k, u_k))$ and $p(y_k | x_k) = \delta(y_k - f_y(x_k))$, for some deterministic functions $f_x(\cdot)$ and $f_y(\cdot)$. A point estimate for the current state (after the minimization of $\mathcal{V}_t$) is chosen as $p_e(x_t) \triangleq \delta(x_t - \hat{\bm{x}}_t)$. Then, any condition other than (ii) $\phi = \overline{p}_t$, (iii) $q_e = p_e$ will lead to infinite divergence in \eqref{eq:G_t_div_term}, and hence in \eqref{eq:G_t_fact}. By contradiction, (ii) $\phi = \overline{p}_t$ and (iii) $q_e = p_e$ are the only viable solutions to the minimization of $\tilde{\mathcal{G}}_t$ under the choice of a deterministic model with a point estimate for the current state.
\end{proof}

\subsection{Second Condition: Vanishing State and Control Cost}
We now consider minimizers of $\eqref{eq:G_t_fact}$ as a function of $\lambda$, and define 
\begin{align}
    r_\lambda(\overline{y},\overline{x},x_t,\overline{u}) \triangleq \frac{q^*(x_t)\phi^*(\overline{y},\overline{x}|x_{t},\overline{u})}{p_e(x_t)\overline{p}_t(\overline{y},\overline{x}|x_{t},\overline{u})}=\frac{q^*(\overline{y},\overline{x},x_t|\overline{u})}{p_t(\overline{y},\overline{x},x_t|\overline{u})}.
\end{align}
Note that the distribution $q$ is an argument of $\eqref{eq:G_t_fact}$. Hence, the optimal $q$ depends on $\lambda$. In light of Remark \ref{rem:FEP}, we see that for $\lambda = 0$,  $r_\lambda(\overline{y},\overline{x},x_t,\overline{u}) = 1$, while for $\lambda >0$, $r_\lambda(\overline{y},\overline{x},x_t,\overline{u}) \neq 1$. 
\begin{thm}
	\label{thm:accurateq}
    Let (i) $\lim_{\lambda \to 0^+} \frac{1}{\lambda}\mathbb{E}_{q}\!\!\left[\log r_\lambda(\overline{y},\overline{x},x_t,\overline{u})\right] = 0$,
    (ii) $\lim_{\lambda \to 0^+} r_\lambda(\overline{y},\overline{x},x_t,\overline{u}) = 1$, $\forall \overline{y},\overline{x},x_t,\overline{u}$, 
    and (iii) $\bm{u}_t = \operatorname{mode} \pi_t$. Then, $\bm{u}_{t,q}^*=\bm{u}_{t,\pi}^*$.
\end{thm}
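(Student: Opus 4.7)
The plan is to use the decomposition in \eqref{eq:G_t_fact}, apply the two asymptotic hypotheses to reduce the ActInf objective to an entropy-regularized standard-OC objective, identify the resulting optimal policy $\pif^*$ as a Boltzmann distribution, and match its mode with $\bm{u}_{t,\pi}^*$.

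First I would regroup the terms in \eqref{eq:G_t_fact}. Using $\mathbb{E}_q[\log \pif(\uf)] = -H(\pif)$ together with the identity
\begin{align*}
D(q(x_t)\Vert p_e(x_t)) + \mathbb{E}_q\!\left[\log\frac{\phi(\yf,\xf|x_t,\uf)}{\overline{p}_t(\yf,\xf|x_t,\uf)}\right] = \mathbb{E}_q[\log r_\lambda],
\end{align*}
the ActInf objective at the optimum $q^*$ rewrites as
\begin{align*}
\mathcal{G}^{\dagger}_t[q^*] = -H(\pif^*) + \mathbb{E}_{q^*}[\log r_\lambda] + \lambda\,\mathbb{E}_{p_e\overline{p}_t\pif^*}\!\!\left[r_\lambda \sum_{k=t}^{t+T}\ell(x_k,u_k)\right].
\end{align*}
By Markovianity, $p_e(x_t)\,\overline{p}_t(\yf,\xf|x_t,\uf) = p_t(x_t,\yf,\xf|\uf)$, so setting $r_\lambda=1$ identifies the last expectation with the standard stochastic-OC cost $\mathcal{J}_t$ evaluated at $\pif^*$.

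Second, I would take $\lambda \to 0^+$ and apply the hypotheses. Condition (ii) gives $r_\lambda \to 1$ pointwise (hence $\phi^* \to \overline{p}_t$ and $q^*(x_t) \to p_e(x_t)$), while condition (i) gives $\mathbb{E}_{q^*}[\log r_\lambda] = o(\lambda)$. Substituting these yields, to leading order,
\begin{align*}
\mathcal{G}^{\dagger}_t[q^*] = -H(\pif^*) + \lambda\,\mathcal{J}_t[\pif^*] + o(\lambda),
\end{align*}
where $\mathcal{J}_t[\pif^*] = \mathbb{E}_{\pif^*}[\tilde J(\uf)]$ and $\tilde J(\uf) \triangleq \mathbb{E}_{p_t|\uf}[\sum_{k=t}^{t+T} \ell(x_k,u_k)]$ is the expected cost conditioned on the trajectory $\uf$. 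The leading-order minimization over $\pif^*$ is then an entropy-regularized OC problem whose closed-form minimizer is the Boltzmann distribution $\pif^*(\uf) \propto \exp(-\lambda\,\tilde J(\uf))$.

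Finally, since the exponent is monotone in $\tilde J$, for every $\lambda>0$ the joint mode of $\pif^*(\uf)$ is $\arg\min_{\uf}\tilde J(\uf)$, which is exactly the deterministic optimal standard-OC trajectory; in particular, its first coordinate is $\bm{u}_{t,\pi}^*$. Condition (iii) tells me to extract the ActInf action by $\bm{u}_{t,q}^* = \mode(q^*_t(u_t)) = \mode(\pif^*_t(u_t))$, so the claim $\bm{u}_{t,q}^* = \bm{u}_{t,\pi}^*$ reduces to identifying the mode of the marginal $\pif^*_t(u_t)$ with the $u_t$-coordinate of the joint mode of $\pif^*(\uf)$. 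This last step is the main obstacle: it is immediate for LQG (the Boltzmann becomes Gaussian, so marginal means coincide with the corresponding coordinates of the joint mean and mode) and more generally under log-concavity of $\tilde J(\uf)$, but conditions (i) and (ii) themselves govern only the asymptotic factorization of $q^*$ and do not directly enforce this marginal/joint-mode identity, so it must be invoked as an additional structural assumption on the cost.
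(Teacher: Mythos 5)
Your route diverges from the paper's at one decisive point: the treatment of the policy-entropy term $\mathbb{E}_q[\log \pif(\uf)]$. The paper discards it at the outset, invoking the same argument as in the proof of Theorem~\ref{thm:deterministic}: because condition (iii) extracts the mode of the policy, the entropy term is declared not to affect action selection and is absorbed into a constant. After that, condition (ii) strips $r_\lambda$ out of the cost term, the remaining two divergence terms are combined into $\frac{1}{\lambda}\mathbb{E}_{q^*}[\log r_\lambda]$ exactly as you do, the objective is rescaled by $1/\lambda$, and condition (i) kills the divergence in the limit, leaving precisely $\mathbb{E}_{p_e^*,\pf^*,\pif^*}[\sum_k \ell(x_k,u_k)]$, i.e.\ $\mathcal{J}_t$ up to a positive constant. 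The minimizer of $\mathcal{J}_t$ over $\pif$ is then a degenerate (point-mass) policy at the cost-minimizing control sequence, so the marginal $q_t^*(u_t)$ is itself a point mass and the mode-of-marginal versus coordinate-of-joint-mode question never arises. By contrast, you retain the entropy term, obtain an entropy-regularized objective whose minimizer is the Boltzmann policy $\pif^*(\uf)\propto\exp(-\lambda \tilde J(\uf))$, and are then forced to confront exactly that question --- which, as you correctly observe, conditions (i)--(iii) do not resolve and which generally requires extra structure on $\tilde J$ (e.g.\ the Gaussian/LQG case). So the obstruction you flag is real for your route but is an artifact of keeping the entropy term; as written, your argument proves the theorem only under an additional assumption that the statement does not contain.

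That said, your analysis is not wasted: it makes partially precise the step the paper handles heuristically. The paper's claim that mode extraction neutralizes the entropy term is justified by your observation that the mode of $\exp(-\lambda\tilde J)$ is $\arg\min_{\uf}\tilde J(\uf)$ for every $\lambda>0$ --- but only for the \emph{joint} mode, whereas \eqref{eq:actinf:action} is applied to the \emph{marginal} $q_t^*(u_t)$. If you want to salvage your route without extra assumptions, the cleanest fix is to align with the paper: use condition (iii) to argue the entropy term can be dropped from the objective before taking $\lambda\to 0^+$ (or, more carefully, note that the entropy regularization vanishes relative to the cost term only if one first removes it, since $-H(\pif)/\lambda$ diverges after the $1/\lambda$ rescaling --- which is exactly why the paper must discard it rather than carry it through the limit). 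Either way, make explicit that the limiting minimizing policy is degenerate, which is what closes the marginal/joint gap.
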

\begin{proof}
	See Appendix~\ref{sec:pf:thm:accurateq}. 
\end{proof}
Condition (ii) requires that, under a vanishing $\lambda$, the second term of \eqref{eq:G_t_fact} grows to zero faster than $\lambda$ itself. Hence, under (ii), the last term will dominate over the second term, retaining the dependence of $\tilde{\mathcal{G}}_t$ on $\ell$ (see the proof for details). Note that if we outright require $\lambda=0$, all dependence on $\ell$ is immediately lost. Instead, the limit ensures that the influence of the cost $\ell$ is retained.

It is not straightforward to see when the conditions (i) -- (ii) of Theorem~\ref{thm:accurateq} apply. In the subsequent sections, we further discuss the implications of Theorem~\ref{thm:deterministic} and Theorem~\ref{thm:accurateq} for the special case of a linear Gaussian SSM.

\begin{figure*}
    \centering
    \resizebox{\textwidth}{!}{\includegraphics{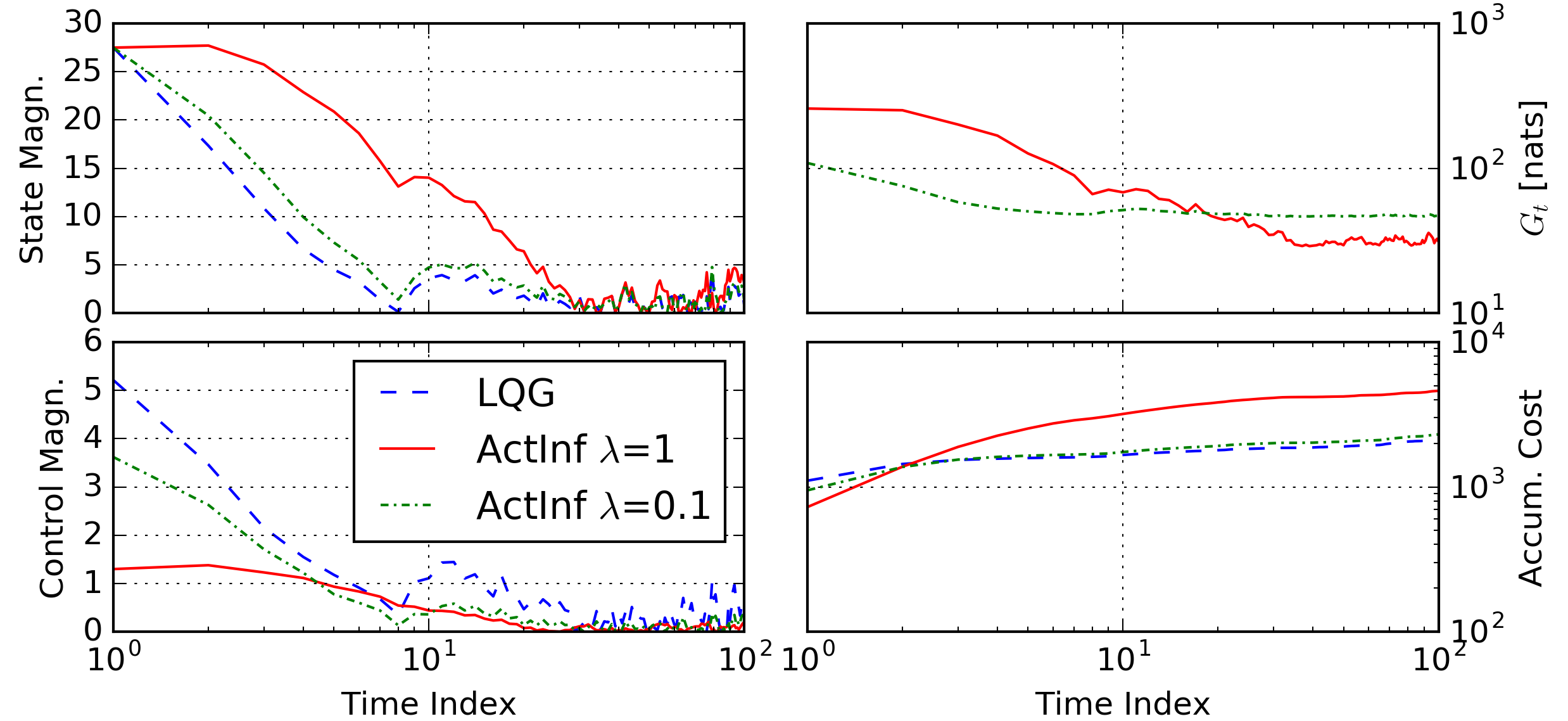}}
    \caption{Results comparing LQG with ActInf control, where the time-axis is log-scaled. The more aggressive LQG control (bottom left), leads to faster state adjustments (top left). ActInf control for small but nonzero $\lambda$ reduces to LQG control. Notably, although ActInf control with $\lambda = 1$ accumulates higher cost in terms of $\ell(x_k, u_k)$ in  \eqref{eq:quadratic_cost} (bottom right), it achieves lower free energy than ActInf control with small $\lambda$ (top right).}
    \label{fig:results}
\end{figure*}

\section{Relationship Between LQG Control And Active Inference For A Linear Gaussian SSM}
\label{sec:relationship}
We now investigate the behavior of the ActInf controller under a linear Gaussian state-space model. We assume a linear Gaussian system with the respective transition and observation precisions $W_w$ and $W_v$, as follows:
\begin{subequations}
\label{eq:lgm}
\begin{align}
    p(x_{t+1} | x_t, u_t) &= \mathcal{N}(x_{t+1} | A x_t + B u_t, W_w^{-1}) \label{eq:lgm_a}\\
    p(y_t | x_t) &= \mathcal{N}(y_t | C x_t, W_v^{-1}) \label{eq:lgm_b}\,.
\end{align}
\end{subequations}
where the notation $\mathcal{N}(z | m, V)$ represents a Gaussian distribution with the mean $m$ and the covariance (inverse precision) matrix $V = W^{-1}$ for the variable $z$. We consider the quadratic cost in \eqref{eq:quadratic_cost}, leading to independent Gaussian goal priors \eqref{eq:goal_prior}. 

\subsection{Algebraic Results for the Active Inference Controller}
A closed-form expression for the resulting ActInf regulator is obtained by propagating the messages of Fig.~\ref{fig:inference_ffg} algebraically as follows: 
\begin{thm}
	\label{thm:lqg:actinf}
	The ActInf solution for the system in \eqref{eq:lgm} is given by
	\begin{subequations}
	\label{eq:LQG_control}
	\begin{align}
	    \bm{u}_{t,q}^* &= -K_t \hat{\bm{x}}_t\\
	    K_t &= \!\left[B^{\T}\!\!\left( A \hat{V}'_t A^{\T}  +  P_{t+1}^{-1}  +  W_w^{-1}\right)^{-1}\!\!\!B + \lambda R\right]^{-1}\!\!\!\!\!\times \notag\\
	    &\quad B^{\T}\!\!\left(A \hat{V}'_t A^{\T} + P_{t+1}^{-1} + W_w^{-1}\right)^{-1}\!\!\!A\,\hat{V}'_t\,\hat{W}_t \label{eq:control_law}\\
	    \hat{V}'_t &= \left(\hat{W}_t + \lambda Q\right)^{-1}\,, \label{eq:V_hat_t}
	\end{align}
	\end{subequations}
	where $\hat{\bm{x}}_t$ and $\hat{W}_t$ are the respective mean and precision of $p_e$ (the normalized message $\smallcircled{A}$). Here, $P_k$, i.e. the precision of the backward message over state $x_k$, is given by
	\begin{subequations}
	\label{eq:LQG_recursion}
	\begin{align}
	    P_{t+T} &= \lambda Q \label{eq:P_t_T}\\
	    P_{k-1} &= -A^{\T}P_k B\left(R' + B^{\T}P_k B\right)^{-1}B^{\T}P_k A\,+ \notag\\
	    &\quad A^{\T} P_k A + \lambda Q\\
	    R' &= \left(\left[\lambda R\right]^{-1} + \left[B^{\T}W_w B\right]^{-1}\right)^{-1}. \label{eq:augmented_weighting_matrix}
	\end{align}
	\end{subequations} 
\end{thm}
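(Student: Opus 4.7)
The plan is to execute the sum--product message passing of Fig.~\ref{fig:inference_ffg} explicitly for the model~\eqref{eq:lgm} with the quadratic cost~\eqref{eq:quadratic_cost}. All factors—the Gaussian transition and observation likelihood, and the Gaussian state and control goal priors obtained from~\eqref{eq:goal_prior}—are Gaussian, so every message stays Gaussian and the closed-form Gaussian fusion rules (product and marginalization) apply throughout. I would organize the computation into a forward piece for the past, a backward piece for the future, and a final fusion at the current time $t$.

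For the past, minimizing $\mathcal{V}_t$ on the clamped left subgraph is standard Kalman filtering (as already pointed out after~\eqref{eq:mu_a_integral}); this produces $p_e(x_t)=\mathcal{N}(x_t\mid\hat{\bm x}_t,\hat W_t^{-1})$ and nothing else needs to be done on that side. For the future, I would work leaf-first. At the rightmost equality node no observation or downstream information is present other than the goal prior with precision $\lambda Q$ on $x_{t+T}$, so the backward message initializes at $P_{t+T}=\lambda Q$, matching~\eqref{eq:P_t_T}. Because all future observations $y_{t+1:t+T}$ are free and carry no goal prior in~\eqref{eq:goal_prior} for the cost~\eqref{eq:quadratic_cost}, the observation branches integrate to a constant and contribute nothing to the backward recursion.

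The inductive step from $P_k$ to $P_{k-1}$ is the core of the computation. Given a zero-mean backward Gaussian on $x_k$ with precision $P_k$, at the transition node I would (i)~multiply by $\mathcal{N}(x_k\mid Ax_{k-1}+Bu_{k-1},W_w^{-1})$ and integrate out $x_k$, which produces a Gaussian factor on $Ax_{k-1}+Bu_{k-1}$ with covariance $P_k^{-1}+W_w^{-1}$; (ii)~multiply by the Gaussian control goal prior $\mathcal{N}(u_{k-1}\mid 0,(\lambda R)^{-1})$ and integrate out $u_{k-1}$; (iii)~multiply by the state goal prior $\mathcal{N}(x_{k-1}\mid 0,(\lambda Q)^{-1})$ arriving at the equality node. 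A single application of the Woodbury identity rewrites the elimination in step~(ii) so that the combination of process noise and control prior collapses into the augmented precision $R'$ of~\eqref{eq:augmented_weighting_matrix}, and the resulting precision on $x_{k-1}$ takes the discrete Riccati form stated in~\eqref{eq:LQG_recursion}. Concretely, the identity $P_k-P_kB(R'+B^\T P_kB)^{-1}B^\T P_k=(P_k^{-1}+B(R')^{-1}B^\T)^{-1}$ is the step that converts the raw Gaussian elimination into the compact closed form.

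At the current time $t$ I would do the final fusion. The message $q^*(x_t)$ is the product of $p_e$ (precision $\hat W_t$) with the state goal prior (precision $\lambda Q$), giving covariance $\hat V'_t=(\hat W_t+\lambda Q)^{-1}$ as in~\eqref{eq:V_hat_t}. Pushing this Gaussian through the transition jointly with the process noise $W_w^{-1}$ and the backward message of covariance $P_{t+1}^{-1}$ yields message~\smallcircled{D}: a Gaussian in $u_t$ whose precision before the control prior is $B^\T(A\hat V'_tA^\T+P_{t+1}^{-1}+W_w^{-1})^{-1}B$ and whose information vector is linear in $\hat{\bm x}_t$ with coefficient $-B^\T(A\hat V'_tA^\T+P_{t+1}^{-1}+W_w^{-1})^{-1}A\hat V'_t\hat W_t$. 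Multiplying by message~\smallcircled{E}, the control goal prior with precision $\lambda R$, gives the posterior Gaussian on $u_t$; taking its mode per~\eqref{eq:actinf:action} yields exactly $\bm u^*_{t,q}=-K_t\hat{\bm x}_t$ with $K_t$ as in~\eqref{eq:control_law}. The main obstacle is bookkeeping: keeping the roles of $P_k^{-1}$, $W_w^{-1}$, and $(\lambda R)^{-1}$ straight at each elimination, and applying Woodbury at precisely the right step so that the recursion closes in the compact Riccati form rather than a bulkier expression. Once that identity is in hand, the rest is mechanical Gaussian algebra and an induction on $k$ from $t+T$ down to $t+1$.
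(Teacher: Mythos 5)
Your proposal is correct and follows essentially the same route as the paper's proof: explicit Gaussian belief propagation on the factor graph, with a Kalman forward pass giving $p_e$, a backward recursion initialized at $P_{t+T}=\lambda Q$ (with the observation branches dropping out because the goal prior is independent of $\overline{y}$), a Woodbury step that absorbs the process noise and control prior into $R'$ to close the Riccati recursion, and a final fusion of the forward estimate, state goal prior, backward message, and control prior at time $t$ whose mode gives $-K_t\hat{\bm{x}}_t$. The intermediate quantities you identify (the covariance $P_k^{-1}+W_w^{-1}$ after eliminating $x_k$, the precision and information vector of the message on $u_t$) match the paper's message table, so the remaining work is exactly the mechanical Gaussian algebra you describe.
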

\begin{proof}
	See Appendix~\ref{sec:pf:thm:lqg:actinf}. 
\end{proof}

Note that this result provides an iterative procedure for finding the ActInf solution. In particular, we initialize $P_{t+T}$ with \eqref{eq:P_t_T}. Then, $P_k$'s can be calculated iteratively and offline, i.e., without obtaining the measurements. Then, the action is found  using \eqref{eq:LQG_control}. Here, calculation of $K_t$ requires calculation of $\hat{W}_t$. In the LQG case, $p_e$ is a Gaussian pdf with a mean and covariance that are given by the standard Kalman filtering equations, see for instance \cite{korl2005factor, Kay_1993}.

We now investigate the conditions implied by the two theorems and the corollary from Sec.~\ref{sec:goal_priors}. Theorem~\ref{thm:deterministic} assumes a deterministic model and a point estimate for the current state (Cor.~\ref{cor:1}), which corresponds to $\hat{W}_t=W_w=\epsilon I_2, \epsilon\rightarrow \infty$. Theorem \ref{thm:accurateq} investigates the dependence on $\lambda$, and lets $\lambda \to 0^{+}$. In both cases \eqref{eq:augmented_weighting_matrix} reduces to $R' = \lambda R$, and \eqref{eq:control_law} reduces to $K_t = \left[B^{\T}P_{t+1}B + \lambda R\right]^{-1}\!\! B^{\T}P_{t+1}A$, thus recovering the classically optimal LQG solution in the form of the discrete-time finite horizon Ricatti equations \cite{b_GladLjung}. Note that compared to the standard LQG solution,  both $Q$ and $R$ appear to be scaled with $\lambda$ in the above equations, which has no effect on the optimal solution. This can be seen for instance by recognizing that this scaling corresponds to the scaling of both matrices with $\lambda$ in \eqref{eq:quadratic_cost}, which corresponds to a simple scalar scaling of the objective function.

\section{Numerical Results}
\label{sec:simulation}
\subsection{Scenario}
In this section, we  illustrate the performance of the ActInf controller for varying positive values of $\lambda$ and compare the results with the standard LQG scenario. The ActInf simulations are performed with the ForneyLab probabilistic programming toolbox \cite{cox_factor_2019}, and follow the experimental protocol in \cite{van_de_laar_simulating_2019}. The protocol at each time $t$ consists of three main steps: (i) find $\smallcircled{A}$ and the current state estimate $p_e$ by minimizing $\mathcal{V}_t$ \eqref{eq:FVG}, (ii) from the estimate, find a control posterior $q_t^*(u_t)$ by minimizing $\tilde{\mathcal{G}}_t$ \eqref{eq:G_tilde}, and (iii) pass a selected action to the system \eqref{eq:process} to obtain a new observation.

For the system, we use \eqref{eq:lgm}, with $C=R=Q=W_v=W_w=I_2$, $A=\begin{pmatrix}1 & 0.1\\0 & 1\end{pmatrix}$, $B=\begin{pmatrix}0.1 & 0.5\\0.05 & 0.5\end{pmatrix}$. The GM follows the system assumptions and uses a lookahead of $T=10$. We initialize the system relatively far from equilibrium, at $x_0 = (25, 25)^{\T}$ and choose a vague prior for the initial state $x_0$. 

\subsection{Discussion}

The results are presented in Fig.~\ref{fig:results}, which leads to several interesting observations. Firstly, for small but nonzero $\lambda$, the results (controls, state trajectory, the accumulated cost for $\ell(x_k, u_k)$ and also $\mathcal{G}_t$) of the ActInf controller approaches the results of the LQG controller as expected; see Sec.~\ref{sec:goal_priors} and also the discussions at the end of Sec.~\ref{sec:relationship}. We note that, for the current system, $\lambda = 0.01$ (not plotted) already renders the results of the ActInf and LQG controller nearly visually indistinguishable.

Secondly, the LQG controller is more aggressive than the ActInf regulator in terms of the controls, i.e., the magnitude of the LQG controls are relatively large compared to those of the ActInf regulator. The explicit inclusion of the process noise in ActInf is in contrast to the LQG scenario where the process and estimation noise only affect the state estimation directly but not the regulator \cite{hoffmann_linear_2017}. In particular, (\ref{eq:P_t_T}--\ref{eq:V_hat_t}) depend explicitly upon $W_w$ and $\hat{W}_t$, whereas these terms are not present in the original Ricatti equations. These terms make the ActInf controller more conservative.

Thirdly, the accumulated cost in terms of $\ell(x_k, u_k)$ for the ActInf controller approaches the optimal cost of the LQG controller under decreasing $\lambda$. This observation is consistent with Theorem~\ref{thm:accurateq}, which formulates sufficient conditions for making the ActInf solution coincide with the LQG solution. Interestingly, and perhaps counter intuitively, the terminal free energy for $\lambda=1$ is improved (lower) compared to the $\lambda=0.1$ case. This effect can be interpreted in light of the \emph{good regulator theorem}, which states that ``every good regulator of a system must be a model of that system'' \cite{conant_every_1970}. Namely, where the LQG cost function \eqref{eq:quadratic_cost} measures a quadratic cost, the free energy \eqref{eq:G_t_fact} offers an approximate measure of model fitness \eqref{eq:surprise}. This then implies that the ActInf regulator with $\lambda=1$ better models the system properties than the ActInf regulator with $\lambda=0.1$, leading to lower free energy.

\section{Conclusions}
\label{sec:conclusions}
ActInf and the free energy principle provide a flexible and general framework for stochastic optimal control problems. By including the control cost as goal priors, the control cost appears as an additive term in the free energy. The resulting free energy minimization problem can be solved by belief propagation over the associated factor graph, leading to an elegant and tractable approach to solve stochastic optimal control problems. %
In general, the ActInf controller does not solve the underlying stochastic optimal control problem. To address this, we provide sufficient conditions for which ActInf reduces to traditional stochastic optimal control. In other words, under certain conditions, stochastic optimal control is a subset of ActInf control. Finally, while it is not known for which classes of problem the sufficient conditions hold, we prove and numerically demonstrate that the ActInf controller is a generalization of the important case of the LQG controller. 

At the heart of these methods lies the fact that ActInf allows us to directly control the modeling assumptions. Therefore, we can explicitly include the anticipated effect of the costs and the noise in the control policy. Controlling these assumptions allows us to reproduce traditional stochastic optimal control solutions, such as the LQG controller.

\bibliographystyle{ieeetr}
{\bibliography{references}}


\section*{Appendix}

\appendix

\section{Proof of Theorem~\ref{thm:deterministic}}\label{sec:pf:thm:deterministic}
First, we substitute (ii), (iii) in \eqref{eq:G_t_fact} which removes the second and third term and the factors within the expectation of the last term resulting in
\begin{align}
	\mathcal{G}^{\dagger}_t[q] &= \mathbb{E}_q\!\left[\log \overline{\pi}(\overline{u}) \right] +  \lambda \mathbb{E}_{p_e, \overline{p}_t, \overline{\pi}}\!\left[\sum_{k=t}^{t+T}\ell(x_k, u_k)\right]\,. \label{eq:twoterms}
\end{align}
Let $L_t^T(\xf,x_t, \uf) \triangleq \sum_{k=t}^{t+T}\ell(x_k, u_k) $. We now focus on the second term in \eqref{eq:twoterms}
\begin{subequations}
\begin{align}
    &\lambda \int p_e(x_t)\pf(\yf, \xf | x_t, \uf) \pif(\uf) L_t^T(\xf,x_t, \uf) \d{\uf} \d{\xf} \d{\yf} \d{x_t} \\
    &= C_e \lambda \int \pp_t(\yp, \xp| \up) \pf_t(\yf, \xf | x_t, \uf) \pif(\uf) L_t^T(\xf,x_t, \uf)  \d{y} \d{x} \d{\uf} \label{eq:substitute:pe} \\
    &= C_e \lambda \int p_t(y, x| u) \pif(\uf) L_t^T(\xf,x_t, \uf) \d{y} \d{x} \d{\uf} \label{eq:substitute:p} \\
    & =C_e \lambda \mathcal{J}_t\,, \label{eq:substitute:Jt}
\end{align}
\end{subequations}
where in \eqref{eq:substitute:pe}, we have used \eqref{eq:def:ce} and \eqref{eq:def:pe}; and in \eqref{eq:substitute:p} we have used \eqref{eq:so_past_future}; and in \eqref{eq:substitute:Jt} we have used \eqref{eq:cost}, i.e. the definition of $\mathcal{J}_t$.

Since the mode of the policy is used for the current action by (iv), the negative policy entropy term $\mathbb{E}_q\!\left[\log \overline{\pi}(\overline{u}) \right] $ in \eqref{eq:twoterms} does not affect action selection. We therefore absorb the policy entropy in a constant $C$. Hence, \eqref{eq:G_t_fact} reduces to a function of the form $\mathcal{G}^{\dagger}_t[q] = \lambda C_e \mathcal{J}_t[q] + C$. Scaling of the scalar optimal control objective $\mathcal{J}_t$ does not affect regulator behavior. Hence,  the standard stochastic control solution $\bm{u}_{t,\pi}^*$ is the same as the ActInf solution $\bm{u}_{t,q}^*$. 

\section{Proof of Theorem~\ref{thm:accurateq}}
\label{sec:pf:thm:accurateq}
Recall from Theorem~\ref{thm:deterministic} that under condition (iii), the first term in \eqref{eq:G_t_fact} does not affect the optimal solution. Furthermore, (ii)  removes  the ratio $r_\lambda(\overline{y},\overline{x},x_t,\overline{u})$ from the last term in \eqref{eq:G_t_fact}. Hence, substituting in these modifications and multiplying the objective with $1/\lambda$ (note that multiplications with $1/\lambda>0$ do not change optimal solutions), we have following objective function: 
\begin{subequations}
\begin{align}
   &\frac{1}{\lambda} D(q^*(x_t) \Vert p_e(x_t)) + \frac{1}{\lambda} \mathbb{E}_{q^*}\!\!\left[\log  \frac{\phi^*(\overline{y}, \overline{x}| x_t, \overline{u})}{\overline{p}_t(\overline{y}, \overline{x} | x_t, \overline{u})}\right] + \mathbb{E}_{p_e^*, \pf^*, \pif^*}\!\left[\sum_{k=t}^{t+T}\ell(x_k, u_k)\right]\\
   &= \frac{1}{\lambda} \mathbb{E}_{q^*}\!\!\left[\log  \frac{q^*(x_t)  \phi^*(\overline{y}, \overline{x}| x_t, \overline{u})}{p_e(x_t)\overline{p}_t(\overline{y}, \overline{x} | x_t, \overline{u})}\right] + \mathbb{E}_{p_e^*, \pf^*, \pif^*}\!\left[\sum_{k=t}^{t+T}\ell(x_k, u_k)\right]\,, \label{eq:substitute:d} 
\end{align}
\end{subequations}
where in \eqref{eq:substitute:d} we have used 
\begin{align*}
    D(q^*(x_t) \Vert p_e(x_t)) \!=\!\mathbb{E}_{q^*(x_t)}\!\!\left[\log \frac{q^*(x_t)}{p_e(x_t)}\right] = \mathbb{E}_{q^*}\!\!\left[\log \frac{q^*(x_t)}{p_e(x_t)}\right]\,.
\end{align*}
Taking the limit with $\lambda \to 0^+$ and substituting (ii), we obtain   $  \mathbb{E}_{p_e^*, \pf^*, \pif^*}\!\left[\sum_{k=t}^{t+T}\ell(x_k, u_k)\right]$ as desired. Then,  the optimal control objective \eqref{eq:cost} is again (proportionally) recovered, and hence $\bm{u}_{t,q}^*=\bm{u}_{t,\pi}^*$.

\section{Proof of Theorem~\ref{thm:lqg:actinf}}
\label{sec:pf:thm:lqg:actinf}
The algebraic result for the ActInf regulator, \eqref{eq:LQG_recursion} and \eqref{eq:LQG_control}, is obtained by message passing (Fig.~\ref{fig:inference_ffg}). We derive this result by following the standard belief propagation update rules as summarized by \cite[Table 4.1]{korl2005factor}. For notational convenience, we write mean-variance and mean-precision parameterized Gaussian distributions as $\mathcal{N}_V$ and $\mathcal{N}_W$ respectively, where distribution variable arguments are left implicit.

\subsection*{Backward Recursion}
The backward recursion \eqref{eq:LQG_recursion} follows from message passing in a section of the model as visualized in Fig.~\ref{fig:recursion_ffg}. Note that our specific choice of goal prior \eqref{eq:goal_prior} is independent of $y_k$. As a result, messages $\smallcircled{2}$ and $\smallcircled{3}$ are uninformative, and do not contribute to the end result.

\begin{figure}[ht]
    \centering
    \resizebox{0.8\textwidth}{!}{

\begin{tikzpicture}
    [node distance=15mm]


    \node (x_k_min) {$\dots$};
    \node[detbox, right of=x_t_min] (eq_trans) {$=$};
    \node[box, above of=eq_trans] (N_x) {$\mathcal{N}_W$};
    \node[clamped, left of=N_x] (m_N_x) {};
    \node[clamped, right of=N_x] (W_N_x) {};
    \node[box, right of=eq_trans] (A) {$A$};
    \node[detbox, right of=A] (add) {$+$};
    \node[box, above of=add] (B) {$B$};
    \node[box, above of=B] (N_u) {$\mathcal{N}_W$};
    \node[clamped, left of=N_u] (m_N_u) {};
    \node[clamped, right of=N_u] (W_N_u) {};
    \node[box, right of=add] (N_trans) {$\mathcal{N}_W$};
    \node[clamped, above of=N_trans] (W_N_trans) {};
    \node[detbox, right of=N_trans] (eq_obs) {$=$};
    \node[box, below of=eq_obs] (C) {$C$};
    \node[box, below of=C] (N_obs) {$\mathcal{N}_W$};
    \node[clamped, left of=N_obs] (W_N_obs) {};
    \node[below of=N_obs, node distance=12mm] (y_k) {};
    \node[right of=eq_obs] (x_k) {$\dots$};

    \draw (x_t_min) -- node[anchor=south]{$x_{k-1}$} (eq_trans);
    \draw (N_x) -- (eq_trans);
    \draw (m_N_x) -- node[anchor=north]{$\rightarrow$} node[anchor=south]{$0$} (N_x);
    \draw (W_N_x) -- node[anchor=north]{$\leftarrow$} node[anchor=south]{$\lambda Q$} (N_x);
    \draw (eq_trans) -- (A);
    \draw (A) -- (add);
    \draw (B) -- (add);
    \draw (N_u) -- node[anchor=west]{$u_{k-1}$} (B);
    \draw (m_N_u) -- node[anchor=north]{$\rightarrow$} node[anchor=south]{$0$} (N_u);
    \draw (W_N_u) -- node[anchor=north]{$\leftarrow$} node[anchor=south]{$\lambda R$} (N_u);
    \draw (add) -- (N_trans);
    \draw (W_N_trans) -- node[anchor=east]{$\downarrow$} node[anchor=west]{$W_w$} (N_trans);
    \draw (N_trans) -- (eq_obs);
    \draw (eq_obs) -- (C);
    \draw (C) -- (N_obs);
    \draw (W_N_obs) -- node[anchor=north]{$\rightarrow$} node[anchor=south]{$W_v$} (N_obs);
    \draw (N_obs) -- node[anchor=west]{$\uparrow$} node[anchor=east]{$y_k$} (y_k);
    \draw (eq_obs) -- node[anchor=north]{$x_k$} (x_k);

    \msg{up}{left}{eq_obs}{x_k}{0.5}{1};
    \msg{right}{up}{N_obs}{C}{0.5}{2};
    \msg{right}{up}{C}{eq_obs}{0.5}{3};
    \msg{down}{left}{eq_obs}{N_trans}{0.5}{4};
    \msg{down}{left}{add}{N_trans}{0.5}{5};
    \msg{left}{down}{N_u}{B}{0.5}{6};
    \msg{left}{down}{B}{add}{0.5}{7};
    \msg{down}{left}{A}{add}{0.5}{8};
    \msg{down}{left}{eq_trans}{A}{0.5}{9};
    \msg{left}{down}{N_x}{eq_trans}{0.5}{10};
    \msg{down}{left}{x_k_min}{eq_trans}{0.5}{11};
\end{tikzpicture}}
    \caption{Message passing schedule for recursive backward propagation in a single (future) section of a linear Gaussian state-space model \eqref{eq:lgm}.}
    \label{fig:recursion_ffg}
\end{figure}
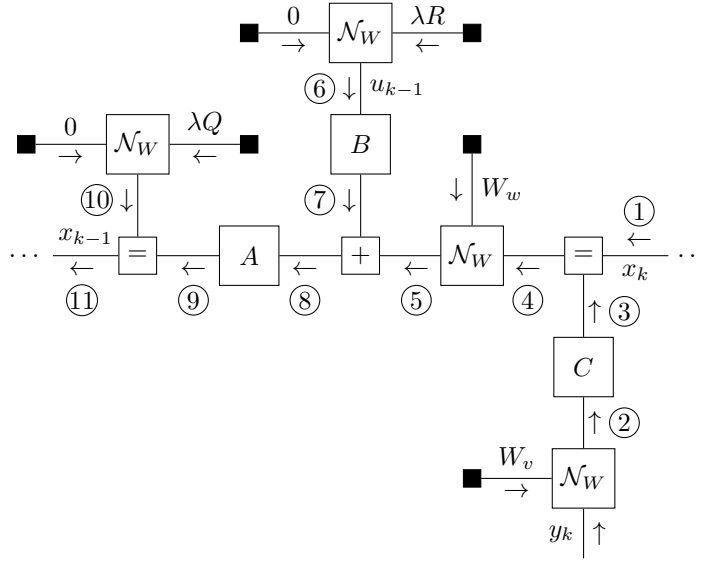

The messages of Fig.~\ref{fig:recursion_ffg} are computed as follows:
\begin{align*}
    \smallcircled{1} &\propto \NW{0, P_k}\\
    \smallcircled{2} &\propto 1\\
    \smallcircled{3} &\propto 1\\
    \smallcircled{4} &\propto \NW{0, P_k}\\
    \smallcircled{5} &\propto \NV{0, P_k^{-1} + W_w^{-1}}\\
    \smallcircled{6} &\propto \NW{0, \lambda R}\\
    \smallcircled{7} &\propto \NV{0, B(\lambda R)^{-1}B^{\T}}\\
    \smallcircled{8} &\propto \NV{0, P_k^{-1} + B(\lambda R)^{-1}B^{\T} + W_w^{-1}}\\
    \smallcircled{9} &\propto \NW{0, A^{\T}(P_k^{-1} + B(\lambda R)^{-1}B^{\T} + W_w^{-1})A}\\
    \smallcircled{10} &\propto \NW{0, \lambda Q}\\
    \smallcircled{11} &\propto \mathcal{N}_W\!\big(0, \underbrace{A^{\T}(P_k^{-1} + B(\lambda R)^{-1}B^{\T} + W_w^{-1})A + \lambda Q}_{P_{k-1}}\big)\,,
\end{align*}
where we identify a recursion over $P_{k-1}$. The recursion is initialized with $P_{t+T} = \lambda Q$, and terminates when $P_{t+1}$ is computed. The result simplifies further:
\begin{align*}
    P_{k-1} &= A^{\T}(P_k^{-1} + B\overbrace{[(\lambda R)^{-1} + (B^{\T}W_w B)^{-1}]}^{[R']^{-1}}B^{\T})A\\
    &\quad + \lambda Q \qquad \text{\emph{(using }} BB^{-1} = I \text{\emph{)}}\\
    &= A^{\T}P_k A - A^{\T}P_k B[R' + B^{\T}P_k B]^{-1}B^{\T}P_k A\\
    &\quad + \lambda Q \qquad \text{\emph{(using Woodbury identity)}}\,,
\end{align*}
with
\begin{align*}
    R' &= [(\lambda R)^{-1} + (B^{\T}W_w B)^{-1}]^{-1}\,.
\end{align*}
This concludes the derivation of \eqref{eq:LQG_recursion}. The computation of $R'$ can be simplified by making use of Searle's identity.

\subsection*{Control Law}
The control law \eqref{eq:LQG_control} follows from message passing in a section of the model as visualized in Fig.~\ref{fig:control_ffg}.

\begin{figure}[ht]
    \centering
    \resizebox{0.8\textwidth}{!}{

\begin{tikzpicture}
    [node distance=15mm]


    \node (x_t) {$\dots$};
    \node[detbox, right of=x_t_min] (eq_trans) {$=$};
    \node[box, above of=eq_trans] (N_x) {$\mathcal{N}_W$};
    \node[clamped, left of=N_x] (m_N_x) {};
    \node[clamped, right of=N_x] (W_N_x) {};
    \node[box, right of=eq_trans] (A) {$A$};
    \node[detbox, right of=A] (add) {$+$};
    \node[box, above of=add] (B) {$B$};
    \node[box, above of=B] (N_u) {$\mathcal{N}_W$};
    \node[clamped, left of=N_u] (m_N_u) {};
    \node[clamped, right of=N_u] (W_N_u) {};
    \node[box, right of=add] (N_trans) {$\mathcal{N}_W$};
    \node[clamped, above of=N_trans] (W_N_trans) {};
    \node[detbox, right of=N_trans] (eq_obs) {$=$};
    \node[box, below of=eq_obs] (C) {$C$};
    \node[box, below of=C] (N_obs) {$\mathcal{N}_W$};
    \node[clamped, left of=N_obs] (W_N_obs) {};
    \node[below of=N_obs, node distance=12mm] (y_t_plus) {};
    \node[right of=eq_obs] (x_t_plus) {$\dots$};

    \draw (x_t_min) -- node[anchor=south]{$x_t$} (eq_trans);
    \draw (N_x) -- (eq_trans);
    \draw (m_N_x) -- node[anchor=north]{$\rightarrow$} node[anchor=south]{$0$} (N_x);
    \draw (W_N_x) -- node[anchor=north]{$\leftarrow$} node[anchor=south]{$\lambda Q$} (N_x);
    \draw (eq_trans) -- (A);
    \draw (A) -- (add);
    \draw (B) -- (add);
    \draw (N_u) -- node[anchor=west, xshift=8mm]{$u_t$} (B);
    \draw (m_N_u) -- node[anchor=north]{$\rightarrow$} node[anchor=south]{$0$} (N_u);
    \draw (W_N_u) -- node[anchor=north]{$\leftarrow$} node[anchor=south]{$\lambda R$} (N_u);
    \draw (add) -- (N_trans);
    \draw (W_N_trans) -- node[anchor=east]{$\downarrow$} node[anchor=west]{$W_w$} (N_trans);
    \draw (N_trans) -- (eq_obs);
    \draw (eq_obs) -- (C);
    \draw (C) -- (N_obs);
    \draw (W_N_obs) -- node[anchor=north]{$\rightarrow$} node[anchor=south]{$W_v$} (N_obs);
    \draw (N_obs) -- node[anchor=west]{$\uparrow$} node[anchor=east]{$y_{t+1}$} (y_t_plus);
    \draw (eq_obs) -- node[anchor=north]{$x_{t+1}$} (x_t_plus);

    \msg{up}{left}{eq_obs}{x_t_plus}{0.5}{1};
    \msg{right}{up}{N_obs}{C}{0.5}{2};
    \msg{right}{up}{C}{eq_obs}{0.5}{3};
    \msg{down}{left}{eq_obs}{N_trans}{0.5}{4};
    \msg{down}{left}{add}{N_trans}{0.5}{5};
    \msg{left}{down}{N_u}{B}{0.5}{6};
    \msg{down}{right}{x_t}{eq_trans}{0.5}{7};
    \msg{left}{down}{N_x}{eq_trans}{0.5}{8};
    \msg{down}{right}{eq_trans}{A}{0.5}{9};
    \msg{down}{right}{A}{add}{0.5}{10};
    \msg{right}{up}{B}{add}{0.5}{11};
    \msg{right}{up}{B}{N_u}{0.5}{12};
\end{tikzpicture}}
    \caption{Message passing schedule for the control law in a single (present) section of a linear Gaussian state-space model \eqref{eq:lgm}.}
    \label{fig:control_ffg}
\end{figure}
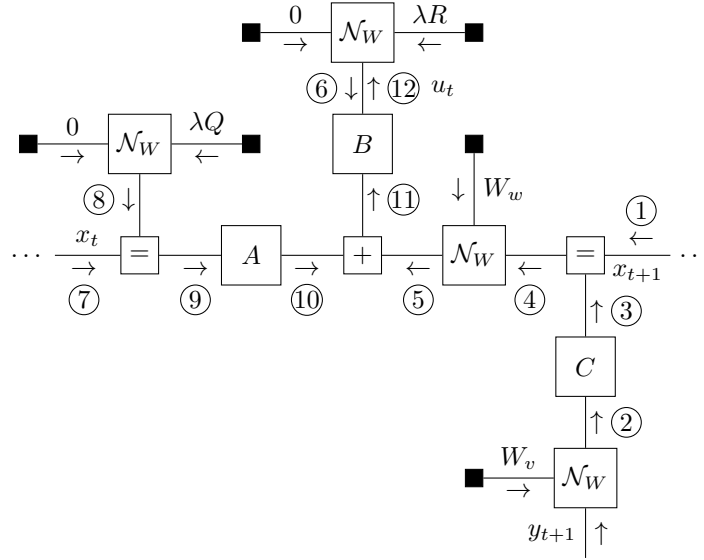

The messages of Fig.~\ref{fig:control_ffg} are computed as follows:
\begin{align*}
    \smallcircled{1} &\propto \NW{0, P_{t+1}}\\
    \smallcircled{2} &\propto 1\\
    \smallcircled{3} &\propto 1\\
    \smallcircled{4} &\propto \NW{0, P_{t+1}}\\
    \smallcircled{5} &\propto \NV{0, P_{t+1}^{-1} + W_w^{-1}}\\
    \smallcircled{6} &\propto \NW{0, \lambda R}\\
    \smallcircled{7} &\propto \NW{\hat{\bm{x}}_t, \hat{W}_t}\\
    \smallcircled{8} &\propto \NW{0, \lambda Q}\\
    \smallcircled{9} &\propto \NV{[\hat{W}_t + \lambda Q]^{-1}\hat{W}_t \hat{\bm{x}}_t, [\hat{W}_t + \lambda Q]^{-1}}\\
    \smallcircled{10} &\propto \NV{A[\hat{W}_t + \lambda Q]^{-1}\hat{W}_t \hat{\bm{x}}_t, A[\hat{W}_t + \lambda Q]^{-1}A^{\T}}\\
    \smallcircled{11} &\propto \mathcal{N}_V\!\Big(-A[\hat{W}_t + \lambda Q]^{-1}\hat{W}_t \hat{\bm{x}}_t,\\
    &\quad A[\hat{W}_t + \lambda Q]^{-1}A^{\T} + P_{t+1}^{-1} + W_w^{-1}\Big)\\
    \smallcircled{12} &\propto \mathcal{N}_W\!\Big(-B^{-1}A[\hat{W}_t + \lambda Q]^{-1}\hat{W}_t \hat{\bm{x}}_t,\\
    &\quad B^{\T}[A[\hat{W}_t + \lambda Q]^{-1}A^{\T} + P_{t+1}^{-1} + W_w^{-1}]^{-1}B \Big)\,.
\end{align*}

The current control then follows from
\begin{align*}
    q^*_t(u_t) &\propto \smallcircled{6}\times \smallcircled{12}\\
    \bm{u}_t &= \operatorname{mode} q^*_t(u_t)\\
    &= -K_t \hat{\bm{x}}_t\,,
\end{align*}
where (using the Gaussian equality rule)
\begin{align*}
    K_t &= [B^{\T}( A \hat{V}'_t A^{\T}  +  P_{t+1}^{-1}  +  W_w^{-1})^{-1}B + \lambda R]^{-1}\times\\
    &\quad B^{\T}(A \hat{V}'_t A^{\T} + P_{t+1}^{-1} + W_w^{-1})^{-1}A\,\hat{V}'_t\,\hat{W}_t\,,
\end{align*}
with
\begin{align*}
    \hat{V}'_t &= (\hat{W}_t + \lambda Q)^{-1}\,.
\end{align*}
This concludes the derivation of \eqref{eq:LQG_control}.

\end{document}